\def\emph{\textbf}
\def\sfrac#1#2{#1/#2}
\def\Vorobev{Vorob'ev}
\let\coprod=\undefined
\DeclareSymbolFont{cmsymbols}{OMS}{cmsy}{m}{n}
\DeclareSymbolFont{cmlargesymbols}{OMX}{cmex}{m}{n}
\DeclareMathSymbol{\coprod}{\mathop}{cmlargesymbols}{"60}
\theoremstyle{plain}
\newtheorem{theorem}{Theorem}[section]
\newtheorem{proposition}[theorem]{Proposition}
\newtheorem{corollary}[theorem]{Corollary}
\theoremstyle{definition}
\newtheorem{definition}[theorem]{Definition}
\newtheorem{example}[theorem]{Example}
\title{Extendability in the Sheaf-theoretic Approach: \\ Construction of Bell Models from Kochen-Specker Models}
\author{
Shane Mansfield
\qquad Rui Soares Barbosa
\institute{Quantum Group \\ Department of Computer Science\\
University of Oxford
}
\email{\{shane.mansfield,rui.soares.barbosa\}@cs.ox.ac.uk}
}
\date{\today}
\def\M{\mathcal{M}}
\def\E{\mathcal{E}}
\def\Ecd{\mathcal{E}^\triangledown}
\def\Bell{\mathsf{Bell}}
\def\down{\downarrow\!}
\def\supp{\mathsf{supp}}
\def\bigmeet{\bigwedge}
\def\KS{\mathsf{KS}}
\def\setdef#1#2{\left\{#1\mid #2\right\}}
\def\Forall#1{\forall_{#1}\boldsymbol{.}\;}
\def\Exists#1{\exists_{#1}\boldsymbol{.}\;}
\def\tuple#1{\langle #1 \rangle}
\def\supp{\mathsf{supp}}
\newenvironment{calculation}{\begin{eqnarray*}&&}{\end{eqnarray*}}
\def\just#1#2{\\ &#1& \rule{2em}{0pt} \{ \mbox{\rule[-.7em]{0pt}{1.8em} \footnotesize #2 \/} \} \nonumber\\ && }
\def\ejust#1{\\ &#1& \nonumber\\ && }
\begin{document}

\maketitle

\begin{abstract}
Extendability of an empirical model was shown by Abramsky \& Brandenburger to correspond in a unified manner to both locality and non-contextuality. We develop their approach by presenting a refinement of the notion of extendability that can also be useful in characterising the properties of sub-models. The refinement is found to have another useful application: it is shown that a particular, canonical extension, when well-defined, may be used for the construction of Bell-type models from models of more general kinds in such a way that the constructed model is equivalent to the original in terms of non-locality/contextuality. This is important since on practical and foundational levels the notion of locality in Bell-type models can more easily be motivated than the corresponding general notion of contextuality. We consider examples of Bell-type models generated from some standard examples of contextual models, including an entire class of Kochen-Specker-like models. This exposes an intriguing relationship between the simplest possible contextual model (the contextual triangle) and Popescu-Rohrlich no-signalling correlations.
\end{abstract}

\section{Introduction}

Any physical theory must make predictions for empirical observations. We will refer to any (possibly hypothetical) set of empirical observations, or any set of theoretical predictions for empirical observations, as an empirical model. In \cite{abramsky:11}, extendability of empirical models was seen to correspond in a unified manner to both locality and non-contextuality, an insight that has initiated diverse lines of research (e.g. \cite{abramsky:12c,abramsky:13, abramsky:12, abramsky:11a,barbosa:13}). We will develop this approach by introducing a refinement of the notion of extendability that captures the idea of partial approximations to locality/non-contextuality. This can be useful in characterising the properties of sub-models.

We will also consider another application of this refinement. Certain empirical models have measurements that can be partitioned into sites, and can be considered to abstract spatially distributed systems: these are the Bell-type models. We are especially interested in a particular, canonical extension, which, when well-defined, may be used for the construction of equivalent Bell-type models from models of the more general kind. On both foundational and practical levels, an advantage of having an equivalent Bell form of a contextual model is that it is much easier to motivate a notion of locality in a Bell scenario than the corresponding notion of non-contextuality in a more general type of measurement scenario, making non-local behaviour all the more striking (a more detailed discussion is contained in \cite{mansfield:13t}). This was realised by Bell, who had observed a similar result \cite{bell:66}\footnote{Written before but eventually published later than \cite{bell:64}.} to that of Kochen \& Specker \cite{kochen:75} before going on to prove his more well-known non-locality theorem \cite{bell:64} (this is also discussed in \cite{mermin:93}). A further advantage is that non-locality can be exploited as an information theoretic resource \cite{barrett:05a}, whereas contextuality has yet to be developed for such purposes.

We find equivalent Bell-type models for many familiar examples of contextual models; in particular, the entire family of Kochen-Specker-like models, which includes, for example, the contextual triangle of Specker's parable \cite{liang:11} and the $18$-vector Kochen-Specker model \cite{cabello:96}. One connection that arises is that the equivalent Bell-type model for the contextual triangle is essentially a folding of several Popescu-Rohrlich boxes \cite{popescu:94}. The Peres-Mermin square \cite{mermin:93} is also treated. This represents a step in the direction of proposing equivalent Bell tests for contexuality results, though an important issue that remains to be addressed is that of quantum realisability.

\subsection*{Outline}
We work within the sheaf-theoretic framework of Abramsky \& Brandenburger \cite{abramsky:11}, which provides an elegant language for dealing with empirical models. We begin by providing a brief summary of the relevant ideas in section \ref{framework}, with particular attention to the kinds of models and measurement scenarios with which we will be concerned here. In section \ref{ssec:extension}, the refined notion of extendability, which in a certain limit recovers the usual notion, is introduced. The construction of Bell-type models is explained in section \ref{ssec:construction}; in section \ref{ssec:ksconstruction} it is shown that this construction can be carried out for the entire family of Kochen-Specker-like models; and examples are given in section \ref{examples}. Finally, we discuss open questions and ongoing lines of research.

\section{The Sheaf-theoretic Framework}\label{framework}

We assume sets $X$ of measurements and $O$ of outcomes. There is an additional structure on the set of measurements, a cover $\M$ over $X$, which specifies the sets of compatible measurements: we think of these as sets of measurements that can be performed jointly. In quantum mechanics, for example, this structure would arise as the commutative subalgebras of the algebra of observables.

\begin{definition}\label{def:mscen}\index{measurement scenario}
We will refer to $(X,O,\M)$ as a \emph{measurement scenario}.
\end{definition}

We will restrict our attention to finite measurement scenarios. Sets in the down-closure $\down\M$ will be referred to as \emph{contexts}\index{context} and will be denoted by the letters $U,V, \dots$; elements of the cover $\M$ itself will usually be referred to as \emph{maximal contexts}\index{maximal context|see{context}}\index{context!maximal context} and will be denoted by the letters $C,D, \dots$.

The \emph{event sheaf}\index{event sheaf} $\E: \mathcal{P}^{\mathrm{op}}(X) \rightarrow \mathbf{Set}$ is defined by $\E(U):=O^U$ for each $U \subseteq X$; i.e.~$\E(U)$ contains all functional assignments of outcomes to the measurements in $U$. In order to describe an empirical model we must specify a probability distribution over the assignments $\E(C)$ for each maximal context $C \in \M$. This can be achieved by composing $\E$ with the distribution functor $\mathcal{D}_R: \mathbf{Set} \rightarrow \mathbf{Set}$, which takes a set to the set of $R$-distributions over it, where $R$ is some semiring. Probability distributions are obtained when $R = \mathbb{R}^+$\index{negative probability}, the non-negative reals. More generally, it can be useful to consider other kinds of distributions: for example `negative probability' ($R = \mathbb{R}$) or `possibilistic' ($R=\mathbb{B}$, the Boolean semiring) distributions. The composition of the two functors, $\mathcal{D}_R \E$, is a presheaf in which restriction is given by marginalisation of distributions. Now, an empirical model can be specified by a family of distributions $\{ e_C \}_{C \in \M}$, where each $e_C \in \mathcal{D}_R \E (C)$.

To avoid confusion between sections of the event sheaf $\E$ and the presheaf $\mathcal{D}_R \E$, we will refer to sections of the former as \emph{assignments} throughout, since they are understood to assign outcomes to measurements.

We build the property of no-signalling into our models by imposing the condition that the marginals of the distributions $\{e_C\}_{C \in \M}$ specifying an empirical model agree wherever contexts overlap; i.e.
\[
\Forall{C,D \in \M}  e_C|_{C \cap D} = e_D|_{C \cap D}.
\]
This implies that there are well-defined distributions $e_U$ for all $U \in \down\M$, since we obtain the same distribution no matter which maximal context we marginalise from. This is \emph{compatibility} in the sense of the sheaf condition.
This property is true of all models arising from quantum mechanics \cite{abramsky:11},
and it corresponds to the usual no-signalling\index{no-signalling} \cite{ghirardi:80}
when the measurement scenario is of the Bell type (section \ref{ssec:types}).

\begin{definition}\index{empirical model!sheaf-theoretically}
An \emph{empirical model} $e$ over a measurement scenario $(X,O, \M)$ is a compatible family of $R$-distributions
\[
\{ e_C \}_{C \in \M},
\]
with $e_C \in \mathcal{D}_R \E (C)$ for each $C \in \M$.
\end{definition}

We can use tables as a convenient way of representing empirical models. The following example illustrates how such a table is anatomised in the sheaf-theoretic language.

\begin{example}[The Bell-CHSH Model \cite{clauser:69,bell:87}]\label{exa:bell}
In this bipartite empirical model, Alice can choose between two measurements $A$ and $A'$ and Bob can choose between $B$ and $B'$, each of which can take outcomes $0$ or $1$, with probabilities as represented in the following table. 
{\rm
\begin{center}
\begin{tabular}{p{3pt}l|cccc} 
 &~&  $00$ & $01$ & $10$ & $11$ \\ \hline
$A$&$B$ &  \sfrac{1}{2} & $0$ &  $0$ & \sfrac{1}{2} \\
$A$&$B'$ &   \sfrac{3}{8} & \sfrac{1}{8} &  \sfrac{1}{8} &  \sfrac{3}{8} \\
$A'$&$B$ &  \sfrac{3}{8} & \sfrac{1}{8} &  \sfrac{1}{8} &  \sfrac{3}{8} \\
$A'$&$B'$ &   \sfrac{1}{8} &  \sfrac{3}{8} &  \sfrac{3}{8} & \sfrac{1}{8} 
\end{tabular}
\end{center}
}
The measurement scenario is described by $X = \{ A,A',B,B' \}, O = \{0,1\}$ and
\[
\M = \{ \{A,B\}, \quad \{A,B'\}, \quad \{A',B\}, \quad \{A',B'\} \}.
\]
The labels for the rows correspond to the maximal contexts, and the cells of each row $C \in \M$ (ignoring the entries for now) correspond to the assignments $\E(C)$. For example,
\begin{equation*}
\E(\{A,B\}) = \{ AB\mapsto 00, \quad AB \mapsto 01, \quad
 AB \mapsto 10, \quad AB \mapsto 11 \}.
\end{equation*}
The entries of each row specify the probability distrubution over these assignments (the joint outcomes). For example, the first row of the table
{\rm
\begin{center}
\begin{tabular}{p{3pt}l|cccc} 
&~ &  $00$ & $01$ & $10$ & $11$ \\ \hline
$A$&$B$ &  \sfrac{1}{2} &  $0$ &  $0$ &  \sfrac{1}{2} 
\end{tabular}
\end{center}}
corresponds to the distribution
\[
e_{\{A,B\}} \in \mathcal{D}_{\mathbb{R^+}} \E (\{A,B\}) .
\]
The sheaf-theoretic empirical model $e$ obtained in this way is well-defined since it arises from quantum mechanics and is therefore necessarily compatible (no-signalling).
\end{example}

Many familiar models are considered in the sheaf-theoretic setting in \cite{abramsky:11} and \cite{abramsky:11a}, including models used for the non-locality/contextuality results of Hardy \cite{hardy:93}, Kochen \& Specker \cite{kochen:75}, Greenberger, Horne \& Zeilinger \cite{greenberger:90} and Peres \& Mermin \cite{peres:91}. 

\subsection{Extendability}\label{ssec:extendability}

An important feature of the framework is that it is general enough to provide a unified approach to non-locality\index{non-locality} and contextuality. The main result of \cite{abramsky:11} is the following theorem.

\begin{theorem}[Abramsky \& Brandenburger]\label{thm:ab}
An empirical model can be realised by a factorisable hidden variable model if and only if the model is extendable to a global section.
\end{theorem}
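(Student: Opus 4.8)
The plan is to prove the two implications separately, exploiting the fact that a global section is nothing but a single distribution $d \in \mathcal{D}_R \E(X)$ over the global assignments $O^X$, together with the reading of \emph{factorisable} as the condition that, conditioned on the value $\lambda$ of the hidden variable, the outcomes of the individual measurements become independent. Concretely, a factorisable model consists of a distribution $p$ over a space $\Lambda$ of hidden variables together with conditionals $h_C(\lambda) \in \mathcal{D}_R \E(C)$ for each maximal context $C$, satisfying $e_C = \sum_\lambda p(\lambda)\, h_C(\lambda)$, such that each $h_C(\lambda)$ is a product distribution $h_C(\lambda)(s) = \prod_{m \in C} h_m(\lambda)(s(m))$ built from single-measurement distributions $h_m(\lambda) \in \mathcal{D}_R(O)$ that depend only on the measurement $m$ and not on the context in which it is performed.

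For the implication from extendability to factorisability, I would take the global section itself as the hidden variable. That is, set $\Lambda := \E(X) = O^X$ and equip it with the distribution $d$ witnessing extendability. To each $\lambda = s \in O^X$ one assigns the deterministic response in which every measurement $m$ returns the outcome $s(m)$; the corresponding conditional $h_C(\lambda)$ is then the point mass on $s|_C$, which is trivially a product of point masses and hence factorisable. Averaging over $\lambda$ recovers $\sum_{s} d(s)\, \delta_{s|_C} = d|_C$, and this equals $e_C$ precisely because $d$ restricts to $e_C$ on each maximal context $C$. So this deterministic, and in particular factorisable, model reproduces $e$.

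For the converse, suppose $e$ is realised by a factorisable model as above. For each fixed $\lambda$ I would assemble the global product distribution $d_\lambda := \prod_{m \in X} h_m(\lambda) \in \mathcal{D}_R \E(X)$ and then average to obtain $d := \sum_\lambda p(\lambda)\, d_\lambda$. The point to verify is that $d$ is a genuine extension, namely that $d|_C = e_C$ for every $C \in \M$. This reduces to the observation that marginalising a product distribution over $X$ down to a context $C$ simply discards the factors indexed by $X \setminus C$, leaving $\prod_{m \in C} h_m(\lambda) = h_C(\lambda)$; hence $d|_C = \sum_\lambda p(\lambda)\, h_C(\lambda) = e_C$ as required.

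The crux of the argument, and the place where factorisability does essential work, is precisely this last marginalisation step. It is legitimate only because the single-measurement factors $h_m(\lambda)$ are well-defined independently of the context: if a shared measurement $m \in C \cap D$ were permitted different factors in the factorisations of $h_C(\lambda)$ and $h_D(\lambda)$, the global product $d_\lambda$ could not simultaneously restrict correctly to both $C$ and $D$. I therefore expect the main obstacle to be confirming that factorisability indeed supplies context-independent local factors (equivalently, that each factorisable model can be reduced to a deterministic one by enlarging the hidden-variable space, after which the correspondence with distributions over $O^X$ is immediate). Once this is secured, the remaining verifications are routine rearrangements of finite sums and products of semiring elements.
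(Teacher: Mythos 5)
Your proposal is correct, but there is nothing in this paper to compare it against: the theorem is imported as the main result of Abramsky \& Brandenburger \cite{abramsky:11}, and the present paper gives no proof of it. Your reconstruction is essentially the original Abramsky--Brandenburger argument: in one direction the global assignments $O^X$ serve as a canonical deterministic hidden-variable space, with the extending distribution $d$ as the distribution over hidden variables and point-mass (hence trivially factorisable) conditionals; in the other direction one forms, for each $\lambda$, the product distribution $\prod_{m \in X} h_m(\lambda)$ over $\mathcal{E}(X)$ and averages against $p$, with the marginalisation-of-products computation you describe doing the work. The crux you flag yourself is the right one, and it is worth knowing how it is resolved in the source: in Abramsky \& Brandenburger's formulation, factorisability is stated in terms of the marginals of the conditional distributions $h_C(\lambda)$ themselves, and the context-independence of the single-measurement factors --- that $h_C(\lambda)$ and $h_D(\lambda)$ marginalise to the same distribution on a shared measurement $m \in C \cap D$ --- is not an additional axiom but a consequence of the compatibility (no-signalling) condition imposed on hidden-variable models for each fixed $\lambda$, i.e.\ what the Bell literature calls parameter independence. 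Whether one builds this into the definition of factorisability, as you do, or derives it from $\lambda$-wise compatibility, as they do, the two directions then go through exactly as you wrote them, and the argument is valid over an arbitrary commutative semiring $R$ since it uses only distributivity and the normalisation of the factors.
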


By \emph{factorisability} it is meant that, when conditioned on any particular value of the hidden variable, the probability assigned to a joint outcome should factor as the product of the probabilities assigned to individual outcomes. For Bell scenarios (section \ref{ssec:types}) this corresponds exactly to Bell locality \cite{bell:64}. On the other hand, a model is said to be \emph{extendable to a global section} precisely when there exists a distribution $d \in \mathcal{D}_R \E(X)$ over global assignments such that $d|_C = e_C$ for all $C \in \M$. This corresponds to non-contextuality in the sense of the Kochen-Specker Theorem \cite{kochen:75}. In fact, the global assignments provide a canonical hidden variable space \cite{brandenburger:11,brandenburger:13}.

In the sheaf-theoretic language, then, locality and non-contextuality are characterised in a unified manner by the existence of global sections. Contextuality will therefore sometimes be used as a general term which is assumed to include non-locality. This insight has already led to many interesting results \cite{abramsky:12c,abramsky:11,abramsky:13, abramsky:12, abramsky:11a,barbosa:13}. Non-locality and contextuality are characterised by obstructions to the existence of global sections.

This greatly generalises earlier work by Fine \cite{fine:82}, which showed in certain bipartite Bell-type measurement scenarios\footnote{(2,2,2) Bell scenarios; c.f. section \ref{ssec:types}.} that for any local hidden variable model there exists an equivalent deterministic local hidden variable model.

\subsection{Possibilistic Collapse}\label{ssec:posscollapse}

It has already been mentioned that it can be of interest to consider `possibilistic' models (Boolean distributions) in which the relevant information consists in knowing whether various outcomes are possible or not, rather than their probabilities. Logical proofs of non-locality \cite{abramsky:10,mansfield:11} (i.e. `proofs without inequalities' of the kind found, for example, in the GHZ \cite{greenberger:90} and Hardy \cite{hardy:93} arguments) rely on such models, and many of the models used for contextuality arguments are also possibilistic in nature (e.g. \cite{cabello:96,peres:91}).

In a possibilistic empirical model the distributions are Boolean\index{Boolean semiring}; i.e.~the semiring is $R = \mathbb{B} = \left( \{0,1\}, \vee ,0, \wedge,1 \right)$ where the truth value `$1$' is understood to denote `possible' and `$0$' to denote `impossible', and `$\vee$' and `$\wedge$' are the logical `and' and `or' operations. We can obtain a possibilistic model from any empirical model via the process of \emph{possibilistic collapse}, which turns any empirical model into a possibilistic one by conflating all non-zero probabilities to the Boolean `$1$'. More carefully, its action is described by the natural transformation $\gamma: \mathcal{D}_{\mathbb{R}^+} \rightarrow  \mathcal{D}_{\mathbb{B}}$ induced by the function
\[
h: \mathbb{R}^+ \rightarrow \mathbb{B},
\quad p \mapsto
\begin{cases}  
  0 & \text{if $p=0$} \\
  1 & \text{if $p\neq0$} \\
\end{cases}.
\]

\begin{example}
The Bell-CHSH model of Example \ref{exa:bell} collapses to the following possibilistic model.
\begin{center}
\begin{tabular}{p{3pt}l|cccc} 
&~ &  $00$ & $01$ & $10$ & $11$ \\ \hline
$A'$&$B$ &  $1$ &  $0$ &  $0$ & $1$ \\
$A$&$B'$ &  $1$ & $1$ & $1$ & $1$ \\
$A$& $B'$ & $1$ & $1$ & $1$ & $1$ \\
$A'$ & $B'$ & $1$ & $1$ & $1$ &$1$ 
\end{tabular}
\end{center}
\end{example}

We introduce a notation that will be extremely useful in dealing with possibilistic models. The \emph{support}\index{support!of a distribution} of a distribution $d$ over $Y$ is the set
\[ \supp(d) := \setdef{y \in Y}{d(y) \neq 0}.\]
We generalise the support notation, by defining, for any $U \subseteq X$,
\[ S_e(U) := \setdef{s \in \E(U)}{\Forall{C\in\M} s|_{C\cap U} \in \supp(e_C|_{C \cap U})}.\]
That is to say, the set $S_e(U)$ contains all functional assignments of outcomes to the measurements $U$ that are consistent with the model $e$. In particular, the set $S_e(X)$ contains all the global assignments\index{assignment!consistency} that are consistent with the model $e$. It is a generalisation in the sense that, for each $U \in \down\M$, we have $S_e(U) = \supp(e_U)$. It can easily be shown that $S_e:\mathcal{P}(X)^{\mathrm{op}} \rightarrow \mathbf{Set}$ defines a sub-presheaf of the sheaf of events.

The possibilistic content of an empirical model is that which is available at the level of the support of the distributions of which it is made up. That is because a Boolean distribution can be equivalently represented by its support: i.e.~there is a bijection
\[
\supp(d) \cong \setdef{y \in Y}{\gamma d(y) = 1}
\]
between the distributions $\mathcal{D}_\mathbb{B}(Y)$ and the non-empty subsets of $Y$,
and therefore
\[\{ S_e(C)\}_{C\in\M} \cong \{ \gamma e_C \}_{C\in\M}.\]

\subsection{A Hierarchy of Contextuality}\label{ssec:posscontextuality}

\subsubsection*{Logical Contextuality}

At the possibilistic level, for any empirical model $e$, we can pose the problem of whether $\gamma e$ is extendable to a global section. Then the problem is to find a Boolean distribution over the global assignments $\E(X)$ which restricts to $\gamma e_C$ for each maximal context $C$. If such a distribution exists we will say that $e$ is \emph{possibilistically extendable} (to a global section).

Using the notation introduced in section \ref{ssec:posscollapse}, we are interested in the existence of a Boolean distribution $d \in D_\mathbb{B}\E(X)$ for which the following conditions hold.
\begin{enumerate}
  \item\label{cond1} $\supp(d) \subseteq S_e(X)$; i.e.~all global assignments in $\supp(d)$ are consistent with the empirical model.
  \item\label{cond2} $\Forall{C\in\M}\Forall{t \in S_e(C)}\Exists{s \in \supp(d)} t = s|_C$; i.e.~any possible local assignment can be obtained as the restriction of some global assignment in $\supp(d)$.
\end{enumerate}
In short,
\[
S_e(C) = \setdef{s|_C}{s \in \supp(d)}
\]
for each $C\in\M$.

\begin{definition}\index{non-locality!logical non-locality}\index{contextuality!logical contextuality}
If an empirical model is not possibilistically extendable to a global section we say that the model is \emph{logically contextual} (or \emph{logically non-local} when appropriate).
\end{definition}

These are the empirical models that admit `logical' proofs of non-locality. Some models can be non-local or contextual without exhibiting the properties at the possibilistic level: an example is the Bell-CHSH model. However, it can be shown that a probabilistic model that exhibits logical contextuality at the possibilistic level is necessarily contextual at the probabilistic level, too \cite{abramsky:11}. Logical contextuality is therefore a strictly stronger form of contextuality. Many familiar empirical models exhibit logical non-locality or contextuality, including the Hardy model \cite{hardy:92,hardy:93}. A recent result \cite{ying:13} even indicates that for any multipartite qubit state there exists some choice of measurements that will give rise to logical non-locality.

\subsubsection*{Strong Contextuality}

Recall that $S_e(X)$ consists of those global assignments that are \emph{consistent} with the support of $e$; i.e.~whose restrictions to every context of compatible observables are possible according to $e$.
These are the only global assignments that could be taken to be possible. It has already been observed that if a possibilistic extension $d \in D_\mathbb{B}\E(X)$ exists then $\supp(d) \subseteq S_e(X)$, and it is clear that in this case $S_e(X)$ is also a possibilistic extension of $e$. This follows from condition \ref{cond2}: if any possible local assignment arises as the restriction of an assignment in $\supp(d)$ then, since $\supp(d) \subseteq S_e(X)$, it arises as a restriction of an assignment in $S_e(X)$.
For this reason, $S_e(X)$ can be regarded as providing a canonical candidate\index{extendability!canonical extension} for a possibilistic extension of the empirical model $e$.

In general, the set $S_e(X)$ of consistent global assignments can fail to determine an extension of the empirical model $e$
if it isn't large enough to account for all `local' assignments\index{assignment!`local'} that are possible in $e$;
that is, if there exists some assignment $s \in S_e(C)$ on some maximal context $C \in \M$ which does not arise as a restriction of a global assignment in $S_e(X)$, as in the example of the Hardy model.
However, the extreme case happens when $S_e(X)$ is empty (then, $S_e(X)$ does not even determine a distribution over $\E(X)$).
This means that there is no global assignment that is consistent with the support of $e$.

\begin{definition}
If no global assignment is consistent with the support of the model $e$, i.e. $S_e(X) = \emptyset$, we say that $e$ is \emph{strongly contextual} (or \emph{strongly non-local}, when appropriate).
\end{definition}

Note that to have non-empty $S_e(X)$ is a weaker property than
possibilistic extendability:
it is simply asking for the existence of some global assignment consistent with the support of $e$.
Correspondingly, the negative property is stronger than possibilistic non-extendability (logical contextuality or non-locality). Some of these ideas will be generalised in section \ref{ssec:extension}.

The Hardy model is logically non-local but not strongly non-local. Strong contextuality is displayed by many models, however, including the GHZ-Mermin model \cite{mermin:90a,mermin:90}, the 18-vector Kochen-Specker model, the Peres-Mermin `magic square' \cite{mermin:93,peres:91} and the Popescu-Rohrlich correlations \cite{popescu:94}.

We thus arrive at a strict hierarchy of contextuality: strong contextuality is stricly stronger than logical contextuality, which in turn is strictly stronger than (probabilistic) contextuality.

\subsection{Bell Scenarios}\label{ssec:types}

Bell scenarios are measurement scenarios that can be thought of as abstracting spatially distributed systems (see figure \ref{fig:bell}). More precisely, $(n,k,l)$ Bell scenarios can be thought of as $n$-partite models, in which each party may choose to perform one of $k$ different measurements, each of which could have $l$ possible outcomes. For example, the model arising from the Bell-CHSH theorem in example \ref{exa:bell} or the Hardy model are both $(2,2,2)$ models. An example of a $(3,2,2)$ model is that which arises from the GHZ-Mermin non-locality argument. Recall that in such scenarios extendability corresponds to the usual notion of Bell locality. Models over measurement scenarios of this kind will be referred to as \emph{Bell-type models} (or often simply \emph{Bell models}).

These are measurement scenarios $(X,O,\M)$ for which the set of measurements can be written as a disjoint union $X = \coprod_{i=1}^n X_i$ such that the maximal contexts are given by the cartesian product $\M = \prod_{i=1}^n \, X_i$.
We define $l:=\left| O \right|$ and $k:= \max_{1\leq i\leq n} \left| X_i\right|$.

As a technical remark, there is a slight abuse of notation here.
Elements of $X = \coprod_{i=1}^n X_i$
are of the form $\tuple{x,i}$ where $i \in \{1,\ldots,n\}$ identifies the site and $x \in X_i$.
An element of the cover $\prod_{i=1}^n \, X_i$, which is an $n$-tuple $\tuple{x_1,\ldots,x_n}$ with each $x_i \in X_i$,
can be seen as a subset of $\coprod_{i=1}^n X_i$ if we interpret it as
$\{\tuple{x_1,1},\ldots,\tuple{x_n,n}\}$.
We will denote it as a tuple, however, to simplify notation.
Therefore, the maximal contexts of a Bell scenario are the sets of measurements that contain one measurement from each site.

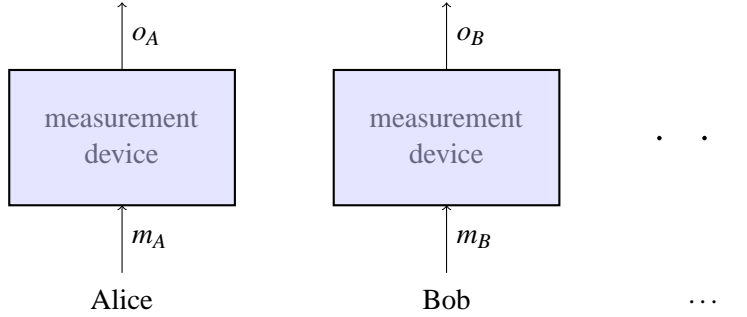
\begin{figure}
\caption{\label{fig:bell} A Bell scenario.}
\begin{center}
\begin{tabular}{ccc}
\begin{tikzpicture}[scale=6]
  \path (0,0) coordinate (A1);
  \path (0,0.3) coordinate (A2);
  \path (0.5,0.3) coordinate (A3);
  \path (0.5,0) coordinate (A4);
  \path (0.25,-0.15) coordinate (B1);
  \path (0.25,0) coordinate (B2);
  \path (0.25,0.3) coordinate (B3);
  \path (0.25,0.45) coordinate (B4);
  \path (0.25,0.15) node {\begin{tabular}{c} measurement \\ device \end{tabular}};
  \begin{scope}[ thick,black]
    \filldraw[fill=blue!20,fill opacity=0.5] (A1) -- (A2) -- (A3) -- (A4) -- cycle;
  \end{scope} 
  \draw[->] (B1) to
    node[midway,right] {$m_A$} (B2);  
  \draw[->] (B3) to
    node[midway,right] {$o_A$} (B4);
\end{tikzpicture}
\qquad & \qquad
\begin{tikzpicture}[scale=6]
  \path (0,0) coordinate (A1);
  \path (0,0.3) coordinate (A2);
  \path (0.5,0.3) coordinate (A3);
  \path (0.5,0) coordinate (A4);
  \path (0.25,-0.15) coordinate (B1);
  \path (0.25,0) coordinate (B2);
  \path (0.25,0.3) coordinate (B3);
  \path (0.25,0.45) coordinate (B4);
    \path (0.25,0.15) node {\begin{tabular}{c} measurement \\ device \end{tabular}};
  \begin{scope}[ thick,black]
    \filldraw[fill=blue!20,fill opacity=0.5] (A1) -- (A2) -- (A3) -- (A4) -- cycle;
  \end{scope} 
  \draw[->] (B1) to
    node[midway,right] {$m_B$} (B2);  
  \draw[->] (B3) to
    node[midway,right] {$o_B$} (B4);
\end{tikzpicture}
\qquad & \qquad
\begin{tikzpicture}[scale=6]
 \path (0.1,0) coordinate (A1);
  \fill[black]   (.1,.3) circle(.005) ;  \fill[black]   (.2,.3) circle(.005) ;  \fill[black]   (.3,.3) circle(.005) ;
\end{tikzpicture}
\\ Alice \qquad & \qquad Bob \qquad & \qquad $\cdots$
\end{tabular}
\end{center}
\end{figure}

\subsection{Kochen-Specker Models}

Of course, not all models are over measurement scenarios of the Bell type: for example, the state-independent model for the $18$-vector proof of the Kochen-Specker theorem\index{Kochen-Specker model} and the Peres-Mermin square. Both of these models make use of measurement scenarios that are of a more general form and cannot be partitioned into sites. Many `non-Bell' models fall into a general class of \emph{Kochen-Specker models}.

These are the possibilistic models on any measurement scenario for which, at each maximal context, an assignment is possible if and only if it maps a single measurement to the outcome $1$. The cohomological characterisation of strong contextuality \cite{abramsky:11a} is complete for certain classes of Kochen-Specker models. Models of this kind arise in a natural way from projective measurements (e.g. \cite{cabello:96}). In this case all maximal contexts contain the same number of measurements, which is fixed by the dimension of the Hilbert space. While Kochen-Specker models may be defined for any measurement cover, we will only consider covers with the property that maximal contexts contain the same number of measurements in this work.

\begin{definition}\label{def:ksfam}
Let $O=\{0,1\}$, and $\M$ be a cover of $X$ for which there exists $n \in \mathbb{N}$ such that $\Forall{C \in \M} \left| C\right| = n$.
The \emph{Kochen-Specker model} on the measurement scenario $(X,O,\M)$ is the possibilistic model $\{e_C\}_{C \in \M}$ defined by
\[
S_{e}(C)=S_{\KS}(C) := \setdef{s \in \E(C)}{o(s)=1},
\]
for all $C \in \M$, where $o(s) := |\setdef{x \in C}{s(x) = 1}|$ for any assignment $s \in \E(C)$.
\end{definition}

\begin{example}
The following is the simplest non-trivial example of a Kochen-Specker model. 
\begin{center}
\begin{tabular}{p{3pt}l|cccc}
&~& $00$ & $01$ & $10$ & $11$   \\ \hline
$A$&$B$ & $0$ & $1$ & $1$ & $0$ \\
$B$&$C$ & $0$ & $1$ & $1$ & $0$ \\
$C$&$A$ & $0$ & $1$ & $1$ & $0$ \\
\end{tabular}
\end{center}
We refer to this model as the \emph{contextual triangle}. It is the model that arises from Specker's parable \cite{seevinck:11,liang:11}. It has also appeared in a somewhat different context in \cite{abramsky:11}. We will return to this model in section \ref{examples}.
\end{example}

\section{No-signalling Extensions of Models}\label{ssec:extension}

We consider the problem
of extending an empirical model to a cover that allows increased compatibility of measurements. For notational convenience, in this section, we fix sets $X$ of measurements and $O$ of outcomes,
so that a measurement scenario can be identified by its cover $\M$ of maximal contexts alone. Also, when we refer to models in this section it will be assumed that we refer to possibilistic models, as introduced in section \ref{ssec:posscollapse}.

\begin{definition}
  Let $\M$ and $\M'$ be two measurement covers. We write $\M \preceq \M'$ when
  $\down \M \subseteq \down \M'$; i.e.
  \[
  \Forall{D \in \M} \quad \Exists{C \in \M'} \quad D \subseteq C.
  \]
\end{definition}

\begin{definition}\index{extendability!partial extendability}
Let $\M \preceq \M'$, and let $e$ be a model defined on $\M$.
A model $f$ on $\M'$ is said to \emph{extend} $e$ if \[\Forall{D \in \M} \qquad f_D = e_D.\]
When such an $f$ exists, we say that $e$ \emph{is extendable to $\M'$}.
\end{definition}

Note that the cover $\M_\top := \{X\}$, in which any subset of measurements is jointly compatible,
is larger than all other covers; i.e.~it is the top element in the poset of measurement covers.
Asking for extendability to the top cover amounts to asking for extendability in the usual sense\index{extendability}:
in other words, locality or non-contextuality.
The notion of extendability to any cover $\M' \prec \M_\top$ therefore captures partial approximations to the usual notion.
One cover that will be of particular interest later is the following.
\begin{definition}
For any cover $\M$ we define $n(\M) := \max_{C\in \M} \left| C \right|$ to be the maximum size of contexts in $\M$ (where no confusion arises as to the cover in question this will be simply denoted $n$). Then we can define another cover
\[
\mathcal{P}_{n}X := \setdef{Y \subseteq X}{|Y| = n}
\]
over $X$. It necessarily holds that $\M \preceq \mathcal{P}_{n}X$.
\end{definition}

We now consider a construction that provides a canonical candidate for an extension of a model to any larger cover (much like $S_e(X)$ for the usual notion of extendability)\index{extendability!canonical extension}. We note, however, that this will not necessarily yield a well-defined model. The idea is to allow every assignment except
those that are directly forbidden by the compatibility/no-signalling condition; i.e.~to allow every assignment in $e'$ that is \emph{consistent}
with the possible assignments in $e$.

\begin{definition}\index{extendability!canonical extension}
  Let $\M \preceq \M'$ and $e$ be a model on $\M$.
  For each $C \in \M'$ and $s \in \E(C)$, we define:
  \[ e'_C(s) := \bigmeet_{W \subseteq C, W \in \down \M} e_W (s|_W) .\]
  If $\{e'_C\}_{C \in \M'}$ is a well-defined model extending $e$,
  we call it the \emph{canonical extension} of $e$ to $\M'$,
  and say that $e$ is \emph{canonically extendable to} $\M'$.
\end{definition}

According to the definition, the support of the extended model $e'$ at each maximal context $C \in \M'$ is
\[ \supp(e'_C) =  \setdef{s \in \E(C)}{\Forall{W \in \down \M, W \subseteq C} \, s|_W \in S_{e}(W)} = S_e(C);\]
i.e.~the support contains those assignments on $C$ that are consistent with the model $e$. We can equivalently express this in a way that mentions only maximal contexts:
\[\supp(e'_C) = \setdef{s \in \E(C)}{ \Forall{D \in \M} \, \Exists{t \in \E(D)} t \in S_{e}(D) \land t |_{C \cap D} = s |_{C \cap D}}.\]
Clearly, for $\M' = \M_\top$,
\[
\supp(e'_X) = S_e(X);
\]
i.e.~the assignments deemed possible by $e'$ are precisely the global assignments consistent with $e$.

It has been shown that $S_e(X)$ provides a canonical local hidden variable space for the model $e$ \cite{brandenburger:11,brandenburger:13}. Although the present construction does not satisfy properties that are quite as strong, the next two propositions show
why this construction, when it yields a well-defined extension, can be regarded as canonical in some sense, especially
with regard to strong contextuality.
\begin{proposition}\label{prop:smaller}
  Let $\M \preceq \M'$, $e$ be a model on $\M$, and $f$ be a model on $\M'$ that extends $e$.
  Then, for all $C \in \M'$, \[ \supp(f_C) \subseteq \supp(e'_C)\] (i.e.~$f_C(s)=1$ implies $e'_C(s)=1$ for any $s \in \E(C)$).
\end{proposition}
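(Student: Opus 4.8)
The plan is to recast the claim in purely Boolean terms using the definition of the canonical extension, and then to discharge it context-by-context using the no-signalling (compatibility) condition together with the hypothesis that $f$ extends $e$. First I would unfold what needs to be shown: since we work possibilistically, $\supp(f_C) \subseteq \supp(e'_C)$ says exactly that $f_C(s) = 1$ implies $e'_C(s) = 1$ for every $s \in \E(C)$. By definition $e'_C(s) = \bigmeet_{W \subseteq C,\, W \in \down\M} e_W(s|_W)$, and in the Boolean semiring a meet equals $1$ precisely when every factor equals $1$. So it suffices to fix an arbitrary $W \subseteq C$ with $W \in \down\M$ and an $s$ with $f_C(s) = 1$, and to prove the single equation $e_W(s|_W) = 1$.

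The heart of the argument is a short chase through marginals. Since restriction in the presheaf $\mathcal{D}_\mathbb{B}\E$ is marginalisation, the value $(f_C|_W)(s|_W)$ is the join, over all $s' \in \E(C)$ with $s'|_W = s|_W$, of $f_C(s')$; as $s$ is itself one such $s'$ and $f_C(s) = 1$, this join is $1$. Now $W \in \down\M \subseteq \down\M'$ (this inclusion is exactly what $\M \preceq \M'$ provides) and $W \subseteq C \in \M'$, so the compatibility of $f$ guarantees that $f_C|_W$ is the well-defined marginal $f_W$; hence $f_W(s|_W) = 1$. It then remains to replace $f_W$ by $e_W$: using that $W \in \down\M$, pick a maximal context $D \in \M$ with $W \subseteq D$. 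The extension hypothesis gives $f_D = e_D$, while compatibility of both models gives $e_W = e_D|_W$ and $f_W = f_D|_W$. Chaining these, $e_W = e_D|_W = f_D|_W = f_W$, so $e_W(s|_W) = f_W(s|_W) = 1$. As $W$ was arbitrary, the Boolean meet defining $e'_C(s)$ is $1$, which is the desired conclusion.

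The one point that needs care — rather than genuine difficulty — is the bookkeeping of which marginals are well-defined and agree. Everything hinges on invoking no-signalling twice, once for $f$ and once for $e$, together with the inclusion $\down\M \subseteq \down\M'$, so that the same sub-context $W$ can be marginalised to consistently from either cover and from either model. No numerical estimate or construction is required, since the statement concerns only supports (Boolean values) and not the underlying probabilities; the hypothesis that $f$ extends $e$ enters solely through the equality $f_D = e_D$ on maximal contexts $D \in \M$, which propagates down to $W$ via the shared marginalisation.
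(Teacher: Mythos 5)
Your proof is correct and follows essentially the same route as the paper's: both arguments reduce the claim to showing that any $s \in \supp(f_C)$ has all its restrictions $s|_W$ (for $W \subseteq C$, $W \in \down\M$) consistent with $e$, which the paper asserts in one line "by virtue of $f$ being an extension of $e$" and you justify in detail via the marginal chase $e_W = e_D|_W = f_D|_W = f_W$. Your write-up simply makes explicit the double use of compatibility that the paper leaves implicit.
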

\begin{proof}
Consider any maximal context $C \in \M'$ and any assignment $s \in \supp(f_C)$. Then, by virtue of $f$ being an extension of
$e$, all restrictions of $s$ to contexts in $\M$ must be consistent with $e$.
That is, for all $W \subseteq C$ with $W \in \down \M$, we have $s|_W \in S_e(W)$.
Then, by the definition of $e'$, it must be that $s \in \supp(e'_C)$.
\end{proof}
This tells us that any extension has less possible assignments than the canonical one. This is not surprising, since the canonical
construction picks out all the assignments that are consistent with the model $e$.
It is clear that, in the extreme,
$e'_C$ might fail to be a distribution for some $C \in \M'$; i.e.
\[ \supp(e'_C)=S_{e}(C) = \emptyset.\]
Obviously, in that case there can be no extension of $e$ to $\M'$ whatsoever.
We say that
$e$ is \emph{strongly non-extendable to $\M'$}, in analogy with the notion of strong contextuality.

The following proposition will be relevant for the construction of Bell models in section \ref{ssec:construction}.
It can be read as saying that $e'$ is the most conservative extension that can be made in terms of not introducing any extra (global) strong contextuality.
\begin{proposition}\label{prop:strctx}
  Let $\M \preceq \M'$, let $e$ be a model on $\M$, and suppose that $e$ is extendable to $\M'$.
  Then $e'$ is strongly contextual if and only if $e$ is strongly contextual.
\end{proposition}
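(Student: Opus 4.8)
The plan is to reduce the biconditional to a single set identity, namely that the consistent global assignments of the canonical extension coincide with those of the original model:
\[ S_{e'}(X) = S_e(X). \]
Since a model $g$ on a cover $\mathcal{N}$ is strongly contextual precisely when $S_g(X) = \emptyset$, this identity immediately gives that $e'$ is strongly contextual if and only if $e$ is, so the whole proposition collapses to proving it. I would first record the two ingredients already available: that $\supp(e'_C) = S_e(C)$ for every $C \in \M'$ (noted right after the definition of the canonical extension), so that
\[ S_{e'}(X) = \setdef{s \in \E(X)}{\Forall{C \in \M'} s|_C \in S_e(C)}; \]
and that $S_e$ is a sub-presheaf of $\E$, so that restriction carries $S_e(C)$ into $S_e(D)$ whenever $D \subseteq C$.

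For the inclusion $S_e(X) \subseteq S_{e'}(X)$, take $s$ consistent with $e$, so that $s|_D \in S_e(D) = \supp(e_D)$ for every $D \in \M$. Fixing $C \in \M'$, to see $s|_C \in S_e(C)$ I would unfold the definition of $S_e(C)$ in terms of the maximal contexts of $\M$ and check that for each $D \in \M$ the marginal $s|_{C \cap D}$ lies in $\supp(e_D|_{C\cap D})$; this is immediate, since it is just the restriction of the support element $s|_D$ and marginalisation of supports behaves well. This direction uses only the good behaviour of marginalisation and does not appeal to $\M \preceq \M'$.

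The reverse inclusion $S_{e'}(X) \subseteq S_e(X)$ is where the ordering of covers enters, and I expect it to be the crux. Given $s \in S_{e'}(X)$ and an arbitrary $D \in \M$, I would invoke $\M \preceq \M'$ to choose some $C \in \M'$ with $D \subseteq C$; then $s|_C \in S_e(C)$ by hypothesis, and restricting along $D \subseteq C$ via the sub-presheaf property yields $s|_D \in S_e(D) = \supp(e_D)$. As $D$ was arbitrary, $s \in S_e(X)$. The one genuine subtlety is that the phrase ``$e'$ is strongly contextual'' presupposes that $e'$ is a bona fide model in the first place, and this is exactly what the hypothesis of extendability buys us: by Proposition \ref{prop:smaller}, any extension $f$ of $e$ to $\M'$ satisfies $\supp(f_C) \subseteq \supp(e'_C)$, and since each $f_C$ is a distribution its support is non-empty, forcing every $\supp(e'_C) = S_e(C)$ to be non-empty so that $e'_C$ is a well-defined Boolean distribution. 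The main thing to watch is therefore not the combinatorics of the set equality, which is routine, but keeping straight the two different covers over which $S_e$ and $S_{e'}$ are computed, and confirming that it is extendability, rather than merely $\M \preceq \M'$, that makes $e'$ a legitimate model to which the notion of strong contextuality applies.
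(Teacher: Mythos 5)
Your core argument is correct and is essentially the paper's own proof: both reduce the proposition to the single identity $S_e(X) = S_{e'}(X)$, derived from the fact $\supp(e'_C) = S_e(C)$ together with the covering property $\down\M \subseteq \down\M'$; the paper writes this as one chain of equivalences where you split it into two inclusions, and you locate the use of $\M \preceq \M'$ in exactly the same direction as the paper does.

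The one place you go wrong is the resolution of your closing ``subtlety''. What extendability plus Proposition \ref{prop:smaller} actually buys is only what you prove: each $\supp(e'_C) = S_e(C)$ is non-empty, so each $e'_C$ is a well-defined Boolean distribution. It does \emph{not} follow that $e'$ is ``a legitimate model'' in the paper's sense, because a model must in addition be a compatible (no-signalling) family, and compatibility of $\{e'_C\}_{C\in\M'}$ can fail even when every support is non-empty --- this is precisely the failure mode of Example \ref{ex:sig}, where both $\supp(e'_{ABD})$ and $\supp(e'_{BCD})$ are non-empty yet $e'$ signals; it is also exactly the extra content of ``canonically extendable'' over ``extendable''. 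Fortunately this misstep does not damage the proof of the proposition: the sets $S_{e'}(C) = \supp(e'_C)$ and $S_{e'}(X)$, and hence the condition $S_{e'}(X) = \emptyset$ defining strong contextuality, depend only on the supports of the distributions $e'_C$ and are meaningful whether or not the family is compatible. Read at that level --- which is how the paper's own proof proceeds, never invoking the extendability hypothesis --- your two-inclusion argument stands as is; you should simply not claim that extendability upgrades $e'$ to a bona fide no-signalling model.
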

\begin{proof}
It is enough to show that $S_e(X) = S_{e'}(X)$; i.e.~that the sets of global assignments consistent with each model coincide.
For a global assignment $s \in \E(X)$,
\begin{calculation}
s \in S_e(X)
\ejust\Leftrightarrow
\Forall{U \in \down\M} \, s|_U \in S_{e}(U)
\just\Leftrightarrow{
``$\Leftarrow$'': all $U$s above satisfy $U \in \down\M \subseteq \down \M'$, hence are covered by some $C \in \M'$}
\Forall{C \in \M'} \, \Forall{W \subseteq C, W \in \down\M} \, s|_W \in S_{e}(W)
\ejust\Leftrightarrow
\Forall{C \in \M'} \, s_C \in S_{e'}(C)
\ejust\Leftrightarrow
s \in S_{e'}(X).
\end{calculation}
\end{proof}

The situation here is more complicated than in the usual case of extensions to the top cover. The key issue is whether compatibility (no-signalling) holds for the extended model. This is by no means guaranteed.
It might happen that the canonical construction $e'$ has too many possible assignments, causing it to be signalling. The reason is that $e'$ picks out all the assignments that are `locally' consistent with $e$, but when overlaps of contexts arise that were not contained in the original cover it is possible that assignments are not compatible. We give an example to illustrate how such behaviour might arise.

\begin{example}\label{ex:sig}
The model $e$ on the cover $\M = \{AB, BC,CD,DA\}$ is given by the following table.
\begin{center}
\begin{tabular}{p{3pt}l|cccc}
&~& $00$ & $01$ & $10$ & $11$   \\ \hline
$A$&$B$ & $1$ & $1$ & $1$ & $1$ \\
$B$&$C$ & $1$ & $0$ & $0$ & $1$ \\
$C$&$D$ & $1$ & $0$ & $0$ & $1$ \\
$D$&$A$ & $1$ & $1$ & $1$ & $1$
\end{tabular}
\end{center}
We consider the canonical extension $e'$ to the cover $\M' = \{ABC,BCD\} \succeq \M$ (see Figure \ref{fig:sig}).
\begin{center}
\begin{tabular}{p{3pt}p{3pt}l|cccccccc}
&~&~      & $000$ & $001$ & $010$ & $011$ & $100$ & $101$ & $110$ & $111$   \\ \hline
$A$&$B$&$D$ &  $1$  &  $1$  &  $1$  &  $1$  &  $1$  &  $1$  &  $1$  &  $1$  \\
$B$&$C$&$D$ &  $1$  &  $0$  &  $0$  &  $0$  &  $0$  &  $0$  &  $0$  &  $1$  
\end{tabular}
\end{center}
The extension $e'$ is clearly not compatible. For example,
\[
1 = \left.{e'_{ABD}}\right|_{BD}(01) \neq \left.{e'_{BCD}}\right|_{BD}(01) = 0.
\]

\begin{figure}
\caption{\label{fig:sig}\index{measurement scenario!as simplicial complex} The measurement covers of Example \ref{ex:sig}: (a) $\M = \{AB,BC,CD,DA\}$ (b) $\M' = \{ABD,BCD\}$.}
\begin{center}
\begin{tabular}{cc}
\begin{tikzpicture}[scale=2.5]
  \path (-0.866, 0.5 ) coordinate (A);
  \path ( 0, 1 ) coordinate (B);
  \path ( 0.866, 0.5 ) coordinate (C);
  \path ( 0, 0 ) coordinate (D);

  \begin{scope}[ thick,black]
    \draw (A) -- (B) -- (C) -- (D) -- cycle;
  \end{scope}
 
  \fill[red]   (A) circle(.05) ;
  \fill[red]   (B) circle(.05) ;
  \fill[red] (C) circle(.05) ;
  \fill[red] (D) circle(.05) ;
  
  \path (A) +(-.2 ,0  ) node {$A$}; 
  \path (B) +(0 ,.2  ) node {$B$};
  \path (C) +(.2 ,0) node {$C$};
  \path (D) +(0 ,-.2) node {$D$};
\end{tikzpicture}
\qquad & \qquad
\begin{tikzpicture}[scale=2.5]
  \path (-0.866, 0.5 ) coordinate (A);
  \path ( 0, 1 ) coordinate (B);
  \path ( 0.866, 0.5 ) coordinate (C);
  \path ( 0, 0 ) coordinate (D);

  \begin{scope}[ thick,black]
    \draw (A) -- (B) -- (D) -- cycle;
    \draw (B) -- (C) -- (D) -- cycle;
  \end{scope}
  
  \begin{scope}[opacity=0.5, color=blue!20]
  \fill (A) -- (B) -- (D) -- cycle;
  \fill (B) -- (C) -- (D) -- cycle;
  \end{scope}
 
  \fill[red]   (A) circle(.05) ;
  \fill[red]   (B) circle(.05) ;
  \fill[red] (C) circle(.05) ;
  \fill[red] (D) circle(.05) ;
  
  \path (A) +(-.2 ,0  ) node {$A$}; 
  \path (B) +(0 ,.2  ) node {$B$};
  \path (C) +(.2 ,0) node {$C$};
  \path (D) +(0 ,-.2) node {$D$};
   
\end{tikzpicture}
\\ (a) \qquad & \qquad (b)
\end{tabular}
\end{center}
\end{figure}

\end{example}

\subsection*{Sub-models}

No-signalling extensions can also be related to the contextuality of sub-models of an empirical model.

\begin{definition}\index{empirical model!sub-model}
Let $e$ be a model on $\M$. For any $U \subseteq X$, the \emph{induced sub-model of $e$ on $U$} is $\{ e_{U \cap C} \}_{C \in \M}$.
\end{definition}
By compatibility of the original model it is clear that any induced sub-model will be a well-defined empirical model. We note that this definition holds for any empirical model, not just the possibilistic ones we are concerned with in this section.

\begin{proposition}\index{contextuality!of sub-models}
Let $e$ be a model on $\M$. If $f$ extends $e$ to $\M'$ then all of the sub-models induced by $\M'$ are non-contextual.
\end{proposition}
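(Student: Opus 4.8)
The plan is to produce, for each maximal context $C' \in \M'$, an explicit global section witnessing non-contextuality of the induced sub-model of $e$ on $C'$. Recall that this sub-model is the family $\{e_{C' \cap C}\}_{C \in \M}$ over the measurement set $C'$, whose maximal contexts are the sets $C' \cap C$ for $C \in \M$. To show it is non-contextual I must exhibit a distribution $d \in \mathcal{D}_R \E(C')$ with $d|_{C' \cap C} = e_{C' \cap C}$ for every $C \in \M$. The natural candidate is ready to hand: the distribution $f_{C'}$ that the extension $f$ already assigns to $C'$, now viewed as a global section for the scenario whose global measurement set is $C'$.

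First I would fix $C' \in \M'$ and verify that $f_{C'}$ restricts correctly. For each $C \in \M$, restriction in $\mathcal{D}_R \E$ is marginalisation, so $f_{C'}|_{C' \cap C} = f_{C' \cap C}$. Next, since $C' \cap C \subseteq C \in \M$, the context $C' \cap C$ lies in $\down\M$; and because $f$ extends $e$, the two models agree on all of $\down\M$, whence $f_{C' \cap C} = e_{C' \cap C}$. Combining, $f_{C'}|_{C' \cap C} = e_{C' \cap C}$ for every $C \in \M$, so $f_{C'}$ is a global section of the induced sub-model of $e$ on $C'$. As $C'$ ranged over an arbitrary element of $\M'$, every sub-model induced by $\M'$ is non-contextual (and, by further marginalising $f_{C'}$, so is the induced sub-model on any $U \in \down\M'$).

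The one step deserving care---and the main (if modest) obstacle---is the passage from agreement on maximal contexts to agreement on all sub-contexts: the hypothesis only literally gives $f_D = e_D$ for $D \in \M$, whereas the argument above needs $f_W = e_W$ for every $W \in \down\M$. This is exactly where compatibility (no-signalling) does the work. Since $\M \preceq \M'$ we have $\down\M \subseteq \down\M'$, so the no-signalling model $f$ has unambiguous marginals $f_W$ on every $W \in \down\M$, as does $e$; picking any $D \in \M$ with $W \subseteq D$ and marginalising gives $f_W = f_D|_W = e_D|_W = e_W$. Once this propagation is recorded, the rest of the argument is the short computation above, with no further subtlety.
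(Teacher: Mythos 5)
Your proof is correct and takes essentially the same route as the paper: both exhibit $f_{C'}$ itself as the global section of the induced sub-model on each $C' \in \M'$, checking that it marginalises to $e_{C' \cap C}$ for every $C \in \M$. The only difference is that you spell out explicitly the propagation of the hypothesis $f_D = e_D$ from maximal contexts $D \in \M$ to all $W \in \down\M$ via compatibility, a step the paper's proof leaves implicit.
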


\begin{proof}
We show that if $\{f_{C}\}_{C\in\M'}$ extends $e$ to $\M'$ then each $f_C$ is a global section of the induced sub-model of $e$ on $C$. The induced sub-model on $C$ is a model defined on the measurement cover $(C,O,\M|_{C})$ where $\M|_{C} = \{D\cap C\}_{D \in \M}$. Since $f$ is an extension, $f_{C}|_{D \cap C} = e_{D \cap C}$ for all $D \in \M$, and $f_{C}$ must be a global section.
\end{proof}

The converse is not necessarily true, however. It is possible that all sub-models that are induced in this way by elements of a cover $\M' \succeq \M$ have a global section
but that one cannot find an extension $\{f_C\}_{C\in\M'}$ (canonical or otherwise) that is no-signalling. This is the situation for example \ref{ex:sig}, for which the induced sub-models on $ABD$ and $BCD$ are non-contextual, but, as we have seen, the model cannot be extended to the cover $\M' = \{ABD,BCD\}$.

Nevertheless, more can be said about the relationship between extendability and induced sub-models when we talk of the strong properties.
\begin{corollary}
  Let $\M \preceq \M'$ and let $e$ be a model on $\M$. Then $e$ is strongly  non-extendable
 to $\M'$ if and only if there exists some $C \in \M'$ such that the induced sub-model of $e$ on $C$
 is strongly contextual.
  In particular, $e$ is strongly non-extendable to $\M' = \mathcal{P}_{n(\M)}X$ if and only if it has a strongly contextual induced sub-model of size $n(\M)$.
\end{corollary}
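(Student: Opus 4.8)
The plan is to reduce both directions of the equivalence, for each fixed $C \in \M'$, to a single identity between two sets of assignments. Recall that, by definition, $e$ is \emph{strongly non-extendable to $\M'$} precisely when there exists some $C \in \M'$ with $\supp(e'_C) = S_e(C) = \emptyset$. On the other side, the induced sub-model of $e$ on $C$ is the model $\{e_{C \cap D}\}_{D \in \M}$ over the measurement scenario $(C,O,\M|_C)$ with $\M|_C = \{C \cap D\}_{D \in \M}$; write $S_{\hat e}(C)$ for its set of consistent global assignments. By definition of strong contextuality applied to this sub-model, it is strongly contextual exactly when $S_{\hat e}(C) = \emptyset$. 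Hence it suffices to prove, for each $C \in \M'$, the identity $S_e(C) = S_{\hat e}(C)$; the stated equivalence then follows immediately by quantifying existentially over $C \in \M'$.

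First I would unwind $S_e(C)$ using the general definition of $S_e$ (equivalently, the support of the canonical extension) at $U = C$:
\[
S_e(C) = \setdef{s \in \E(C)}{\Forall{D \in \M} s|_{C \cap D} \in \supp(e_D|_{C \cap D})}.
\]
Next I would unwind $S_{\hat e}(C)$ by applying the same general definition within the sub-scenario: since $(C \cap D) \cap C = C \cap D$ and the component of the sub-model at $C \cap D$ is $e_{C \cap D}$ itself,
\[
S_{\hat e}(C) = \setdef{s \in \E(C)}{\Forall{D \in \M} s|_{C \cap D} \in \supp(e_{C \cap D})}.
\]
The two defining conditions coincide once compatibility (no-signalling) of $e$ is invoked: because $C \cap D \subseteq D \in \M$, we have $C \cap D \in \down\M$, so the marginal $e_D|_{C \cap D}$ is the well-defined distribution $e_{C \cap D}$, whence $\supp(e_D|_{C \cap D}) = \supp(e_{C \cap D}) = S_e(C \cap D)$. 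Therefore $S_e(C) = S_{\hat e}(C)$, which is the core of the argument.

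Finally, the `in particular' clause is the instance $\M' = \mathcal{P}_{n(\M)}X$. It was already noted that $\M \preceq \mathcal{P}_{n(\M)}X$, and the maximal contexts of this cover are exactly the subsets of $X$ of size $n(\M)$, so applying the equivalence just established yields that $e$ is strongly non-extendable to $\mathcal{P}_{n(\M)}X$ if and only if some size-$n(\M)$ subset carries a strongly contextual induced sub-model. The only step demanding genuine care — and hence the main obstacle — is the identity $S_e(C) = S_{\hat e}(C)$: one must correctly pin down what strong contextuality of the induced sub-model asserts (emptiness of its consistent-global set over the measurement set $C$), and recognise that it is exactly compatibility of $e$ that licenses replacing each marginal $e_D|_{C \cap D}$ by the honest distribution $e_{C \cap D}$. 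Everything else is formal unwinding of definitions together with the existential quantification over $C \in \M'$.
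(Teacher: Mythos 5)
Your proof is correct and is in substance the same argument the paper compresses into its one-line citation: the paper's proof rests on the identification $\supp(e'_C) = S_e(C)$ (the content of the cited proposition about canonical extensions having the largest supports), and your identity between $S_e(C)$ and the set of consistent global assignments of the induced sub-model on $C$ --- licensed precisely by compatibility, $e_D|_{C \cap D} = e_{C \cap D}$ --- is exactly the definitional unwinding that citation leaves implicit. You also adopt the correct reading of strong non-extendability (some $C \in \M'$ with $S_e(C) = \emptyset$, rather than mere non-existence of an extension), which is the reading the paper intends and the only one under which the statement is true, given the signalling example immediately preceding the corollary.
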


\begin{proof}
This follows from Proposition \ref{prop:smaller}.
\end{proof}

\section{Construction of Bell-type Models}\label{ssec:construction}

We turn now to a second construction, which builds
from empirical models on certain kinds of measurement covers Bell models that are equivalent in terms of contextuality. Note that this construction is not restricted to possibilistic models, and will work for any kind of empirical model. The idea is to start with a model on some measurement scenario $(X,O,\M)$, and to transform this into a model on the Bell scenario $\left(\coprod_{i=1}^n X,O,\prod_{i=1}^n  X\right)$. Measurements here are of the form $\tuple{x,i}$ with $x \in X$ and $i \in \{1,\ldots,n\}$ identifying the site (copy of $X$).

\begin{definition}\index{assignment!codiagonal}
  Let $U \subseteq \coprod_{i=1}^n X$. An assignment $s \in \E(U)$
is said to be \emph{codiagonal} if it satisfies
\[\Forall{x \in X} \, \Forall{i,j \in \{1,\ldots, n\}} \, \tuple{x,i}, \tuple{x,j} \in U \implies s(\tuple{x,i}) = s(\tuple{x,j});\]
i.e.~copies of the same measurement at different sites are assigned the same outcome.
Equivalently, in categorical terms, an assignment $s: U \rightarrow O$ is codiagonal when it factors as
\[\xymatrix{U \ar@{^{(}->}[r] & \coprod_{i=1}^n \ar[r]^{~\;\;\bigtriangledown_{_n}} X & X \ar[r] & O}.\]
We denote the set of such assignments by $\Ecd(U)$.
\end{definition}

With each set $U \subseteq \coprod_{i=1}^n X$ of measurements on the new scenario,
we associate a set $\underline{U} \subseteq X$ of measurements on the original, which is obtained by forgetting the site information; i.e.
$\underline{U} := \setdef{x \in X}{\Exists{i} \tuple{x,i} \in U}$.
It is clear that there is a bijection $\Ecd(U) \cong \E(\underline{U})$, which commutes with restrictions to smaller contexts, and we write $\underline{s}$ for the image of an assignment $s \in \Ecd(U)$ under this map. Recall also that $\mathcal{P}_nX := \setdef{Y \subseteq X}{|Y| = n}$.

\begin{definition}
With any empirical model $f$ defined on a measurement scenario $(X,O,\mathcal{P}_nX)$ we associate an $n$-partite empirical model $f^\Bell$ on the Bell scenario $(\coprod_{i=1}^n X,O,\prod_{i=1}^n  X)$ defined by
  \[f^\Bell_C(s) =
                \begin{cases}
		  f_{\underline{C}}(\underline{s})
		  & \text{if $s \in \Ecd(C)$}
		  \\
		  0
		  & \text{if $s \not\in \Ecd(C)$}
		\end{cases} .
  \]
\end{definition}

\begin{proposition}
  $f^\Bell$ is an empirical model.
\end{proposition}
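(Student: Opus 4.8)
that $f^\Bell_C$ is a genuine $R$-distribution over $\E(C)$ for each maximal context $C \in \prod_{i=1}^n X$, and that the resulting family $\{f^\Bell_C\}$ satisfies the compatibility (no-signalling) condition. The first is essentially bookkeeping; the second is where the real content lies.

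First I would handle the distribution condition. Fix a maximal context $C \in \prod_{i=1}^n X$. By definition, $f^\Bell_C$ vanishes off the codiagonal assignments and agrees with $f_{\underline{C}} \circ (\underline{\;\cdot\;})$ on them. Using the bijection $\Ecd(C) \cong \E(\underline{C})$ which commutes with restriction, the total weight of $f^\Bell_C$ over $\E(C)$ equals the total weight of $f_{\underline{C}}$ over $\E(\underline{C})$, which is correct since $f_{\underline{C}}$ is a distribution. Here I would note that each maximal context $C = \tuple{x_1,\ldots,x_n}$ of the Bell scenario contains exactly one measurement from each site, so $\underline{C} = \{x_1,\ldots,x_n\}$ has size $n$ (assuming the $x_i$ are distinct) and hence lies in $\mathcal{P}_n X$, so that $f_{\underline{C}}$ is indeed defined. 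A small technical point to check is the case where $\underline{C}$ has fewer than $n$ elements because some $x_i$ coincide across sites; I would want to confirm the construction and the bijection behave correctly there, or that the indexing set-up rules this out.

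The main work is compatibility. I must show that for any two maximal contexts $C, D \in \prod_{i=1}^n X$, the marginals agree: $f^\Bell_C|_{C \cap D} = f^\Bell_D|_{C \cap D}$. The strategy is to compute both marginals at an arbitrary assignment $t \in \E(C \cap D)$ and compare. If $t$ is not codiagonal, I would argue both marginals vanish, since any codiagonal extension to $C$ or to $D$ must restrict to a codiagonal assignment on $C \cap D$, so non-codiagonal $t$ receives no weight from either side. If $t$ is codiagonal, I would push the marginalisation through the bijection $\Ecd(-) \cong \E(\underline{-})$: summing $f^\Bell_C$ over the codiagonal extensions of $t$ to $C$ corresponds, under the bijection, to summing $f_{\underline{C}}$ over the extensions of $\underline{t}$ to $\underline{C}$, i.e.~to the marginal $f_{\underline{C}}|_{\underline{C \cap D}}(\underline{t})$. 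The same computation on the $D$ side gives $f_{\underline{D}}|_{\underline{C \cap D}}(\underline{t})$.

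This reduces the claim to $f_{\underline{C}}|_{\underline{C\cap D}} = f_{\underline{D}}|_{\underline{C\cap D}}$, which is precisely the compatibility of the original model $f$ on $\mathcal{P}_n X$ at the overlapping context $\underline{C \cap D} = \underline{C} \cap \underline{D}$. The hard part, and the step I would be most careful about, is the interplay between marginalisation on the Bell side and marginalisation on the original side: I must confirm that the codiagonal extensions of $t$ to $C$ are in exact correspondence with the $\E(\underline{C})$-extensions of $\underline{t}$ to $\underline{C}$, and that the zero-weight convention for non-codiagonal assignments does not disrupt this. Once that correspondence is pinned down, compatibility of $f^\Bell$ follows immediately from compatibility of $f$, completing the proof that $f^\Bell$ is an empirical model.
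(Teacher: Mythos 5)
Your overall route matches the paper's: express $f^\Bell_C$ through the bijection $\Ecd(C) \cong \E(\underline{C})$, push marginalisation through that bijection, and invoke compatibility of $f$; your explicit treatment of non-codiagonal $t$ (both marginals vanish) is in fact more careful than the paper, which leaves that case implicit. However, there is a genuine gap in your distribution step. You justify that $f_{\underline{C}}$ is defined by claiming $\underline{C} = \{x_1,\ldots,x_n\}$ has size $n$ and hence lies in $\mathcal{P}_n X$, and you flag the degenerate case (some $x_i$ coinciding across sites) only as a point to check, hoping that ``the indexing set-up rules this out''. It does not: in the Bell scenario $\prod_{i=1}^n X$ each site carries a full copy of $X$, so contexts such as $\tuple{A,A}$ are perfectly legitimate, and they are not a marginal curiosity but essential to the construction --- in the paper's triangle example, the rows $A_1 A_2$, $B_1 B_2$, $C_1 C_2$ with perfect correlations are exactly the contexts whose $\underline{C}$ is a singleton. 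For such $C$ your primary argument ($\underline{C} \in \mathcal{P}_n X$, so $f_{\underline{C}}$ is one of the given distributions) fails. The repair is the one sentence the paper supplies: $\underline{C}$ is always a context, not necessarily maximal, of the scenario for $f$, i.e.\ $\underline{C} \in \down\mathcal{P}_n X$ since $|\underline{C}| \leq n$, and compatibility of $f$ yields a well-defined marginal $f_U$ for every $U \in \down\mathcal{P}_n X$; it is this marginal that the definition of $f^\Bell$ refers to. With that reading both your steps go through for degenerate contexts as well.

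A second, smaller slip: the identity $\underline{C \cap D} = \underline{C} \cap \underline{D}$ is false in general. Take $n = 2$, $C = \tuple{A,B}$, $D = \tuple{B,A}$: then $C \cap D = \emptyset$, so $\underline{C \cap D} = \emptyset$, whereas $\underline{C} \cap \underline{D} = \{A,B\}$. This does not sink the argument, because what you actually need is only $f_{\underline{C}}|_{\underline{C\cap D}} = f_{\underline{D}}|_{\underline{C\cap D}}$, and $\underline{C \cap D}$ is contained in both $\underline{C}$ and $\underline{D}$; the equality then follows from the well-definedness of the marginals $f_U$ on $\down\mathcal{P}_n X$ (both sides equal $f_{\underline{C \cap D}}$), which is the same consequence of compatibility invoked above and is exactly the step the paper labels ``compatibility condition for the model $f$''. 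So: same approach as the paper and correct in outline, but you should replace the $\mathcal{P}_n X$-membership claim by the down-closure argument and drop the false identity.
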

\begin{proof}
In general, we see that $f^\Bell_C$ is equivalent to $f_{\underline{C}}$. Since $\underline{C}$ is a context (not necessarily maximal) of the measurement scenario for $f$, then $f^\Bell_C$ is a well-defined distribution.
As for compatibility, let $C_1$ and $C_2$ be two maximal contexts, and let $C_{1,2} := C_1 \cap C_2$. Then, for $t \in \E(C_{1,2})$,
{\allowdisplaybreaks
\begin{calculation}
  \left. f^\Bell_{C_1}\right|_{C_{1,2}}(t)
  \just={definition of marginalisation}
\sum_{s \in \E(C_1) \atop s|_{C_{1,2}} = t} \, f^\Bell_{C_1}(s)
\just={since $f^\Bell_{C_1}(s)\neq 0$ only if $s$ is codiagonal}
\sum_{s \in \Ecd(C_1) \atop s|_{C_{1,2}} = t} \, f_{\underline{C_1}}(\underline{s})
\just={using the bijection $\Ecd(C_1) \cong \E(\underline{C_1}) :: s \mapsto \underline{s}$}
\sum_{\underline{s} \in \E(\underline{C_1}) \atop \underline{s}|_{\underline{C_{1,2}}} = \underline{t}} \, f_{\underline{C_1}}(\underline{s})
\just={compatibility condition for the model $f$}
\sum_{\underline{r} \in \E(\underline{C_2}) \atop \underline{r}|_{\underline{C_{1,2}}} = \underline{t}} \, f_{\underline{C_2}}(\underline{r}) \\
\just={same steps in reverse order for $C_2$}
\left. f^\Bell_{C_2}\right|_{C_{1,2}}(t).
\end{calculation}} \\
We conclude that
\[ \left. f^\Bell_{C_1}\right|_{C_{1,2}} = \left. f^\Bell_{C_2}\right|_{C_{1,2}}.\]
\end{proof}

\begin{proposition}
  There is a bijection between the global sections of $f$ and of $f^\Bell$.
  In particular,
  \begin{itemize}
  \item
  $f^\Bell$ is non-local if and only if $f$ is contextual,
  \item
  $f^\Bell$ is logically non-local if and only if $f$ is logically contextual,
  \item
  $f^\Bell$ is strongly non-local if and only if $f$ is strongly contextual.
  \end{itemize}
\end{proposition}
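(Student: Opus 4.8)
The plan is to exhibit the bijection concretely and then read off the three equivalences from it. The set of global assignments on the Bell scenario is $\E(\coprod_{i=1}^n X) \cong \E(X)^n$, and the codiagonal assignments $\Ecd(\coprod_{i=1}^n X)$ form the diagonal copy of $\E(X)$ inside it: since $\underline{\coprod_{i=1}^n X} = X$, the bijection $\Ecd(U) \cong \E(\underline U)$ specialises to $\Ecd(\coprod_{i=1}^n X) \cong \E(X)$. I would define the forward map by sending a global section $d \in \mathcal{D}_R\E(X)$ of $f$ to the distribution $d^\Bell$ that places weight $d(\underline s)$ on each codiagonal $s$ and weight $0$ elsewhere; this is exactly the pushforward of $d$ along the diagonal embedding, so it is automatically a normalised $R$-distribution, and the map is manifestly injective.

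First I would check that $d^\Bell$ is a global section of $f^\Bell$. Fix a maximal context $C = \tuple{x_1,\ldots,x_n} \in \prod_{i=1}^n X$ and an assignment $t \in \E(C)$. If $t$ is not codiagonal then no codiagonal global assignment restricts to it, so $d^\Bell|_C(t) = 0 = f^\Bell_C(t)$. If $t$ is codiagonal, then the codiagonal global assignments restricting to $t$ are precisely the lifts of those $\underline u \in \E(X)$ with $\underline u|_{\underline C} = \underline t$, whence $d^\Bell|_C(t) = d|_{\underline C}(\underline t)$. Since $d$ is a global section of $f$ and $\underline C \in \down(\mathcal{P}_n X)$, compatibility gives $d|_{\underline C} = f_{\underline C}$, so $d^\Bell|_C(t) = f_{\underline C}(\underline t) = f^\Bell_C(t)$. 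Thus $d^\Bell$ restricts correctly on every maximal context.

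The crux is surjectivity: I must show that every global section $g$ of $f^\Bell$ is supported on codiagonal assignments, so that it is the image of its own fold. Given $x \in X$ and sites $i \neq j$, choose any maximal Bell context $C$ selecting $x$ at both site $i$ and site $j$. Because $f^\Bell_C$ vanishes on non-codiagonal assignments and $g|_C = f^\Bell_C$, the marginal $g|_C$ assigns weight $0$ to every $t$ with $t(\tuple{x,i}) \neq t(\tuple{x,j})$; hence $g$ itself assigns weight $0$ to every global assignment $s$ with $s(\tuple{x,i}) \neq s(\tuple{x,j})$. Ranging over all $x,i,j$ forces $\supp(g) \subseteq \Ecd(\coprod_{i=1}^n X)$, so transporting $g$ back along the bijection $\Ecd(\coprod_{i=1}^n X) \cong \E(X)$ yields a distribution $d$ on $\E(X)$ with $g = d^\Bell$; the computation of the previous paragraph, run in reverse, shows $d$ is a global section of $f$. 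The step where I expect the only real friction is this passage from ``the marginal vanishes'' to ``each contributing weight vanishes'': it is immediate for $R = \mathbb{R}^+$ (a sum of non-negative reals) and for $R = \mathbb{B}$ (a join of Booleans), which are the cases relevant to the three equivalences, but it can fail for signed semirings such as $R = \mathbb{R}$, where cancellation is possible.

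Finally I would deduce the three bullet points. The bijection $d \leftrightarrow d^\Bell$ commutes with possibilistic collapse, since folding sends non-zero weights to non-zero weights and back, and restricting it to supports identifies the consistent global assignments: running the codiagonality argument at the level of supports gives $S_{f^\Bell}(\coprod_{i=1}^n X) \cong S_f(X)$, because a consistent Bell assignment must be codiagonal and its fold is consistent for $f$ exactly when its restriction to each $\mathcal{P}_n X$-context lies in $S_f$ (the smaller contexts being implied, as $S_f$ is a sub-presheaf). Existence of a probabilistic global section for one model is therefore equivalent to existence for the other, existence of a possibilistic extension matches up likewise, and $S_f(X) = \emptyset$ iff $S_{f^\Bell}(\coprod_{i=1}^n X) = \emptyset$; these yield the probabilistic, logical, and strong equivalences respectively.
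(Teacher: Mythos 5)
Your proof is correct and follows essentially the same route as the paper's: show that every global section of $f^\Bell$ must be supported on codiagonal assignments, then transport along the bijection $\Ecd(\coprod_{i=1}^n X) \cong \E(X)$, which commutes with marginalisation, checking both directions explicitly. The one point where you go beyond the paper is your caveat about cancellation: the paper's proof simply asserts that possible assignments of a global section are codiagonal, and your observation that this step requires the semiring to have no additive cancellation --- valid for $R = \mathbb{R}^+$ and $R = \mathbb{B}$, the cases the three bullet points need, but genuinely false for $R = \mathbb{R}$, where signed weights can cancel in marginals and produce extra global sections of $f^\Bell$ --- is a real refinement of the statement's scope rather than a defect of your argument.
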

\begin{proof}
By the definition of $f^\Bell$, it is clear that any assignments that are deemed possible by a global section $d \in \mathcal{D}_R\E(\coprod_{i=1}^nX)$ of the model must belong to $\Ecd(\coprod_{i=1}^nX)$. Recall that there exists a bijection $\Ecd(U)\cong\E(\underline{U}), s \mapsto \underline{s}$
  for each $U\subseteq \coprod_{i=1}^nX$, and that these bijections commute with restrictions.
  The correspondence lifts to provide bijections between $\mathcal{D}_R\Ecd(U)$ and $\mathcal{D}_R\E(\underline{U})$
that commute with marginalisation. Therefore, if $U = \coprod_{i=1}^nX$ (and $\underline{U} = X$), the resulting bijection gives the desired correspondence between global sections.
\end{proof}

On a related note, it is worth pointing out that any compatible family of $\mathbb{Z}$-linear combinations of assignments in $f$ can be lifted to a compatible family of this kind on $f^\Bell$ via the injective maps
\[
\E(\underline{C}) \cong \Ecd(C) \subseteq \E(C).
\]
This straightforwardly leads to the following proposition, which relates to the cohomological witness of non-locality/contextuality from \cite{abramsky:11a}. It is unclear whether the converse holds.

\begin{proposition}
The existence of a non-vanishing cohomological obstruction in $f^\Bell$ implies the existence of a non-vanishing obstruction in $f$\index{cohomology!of constructed models}.
\end{proposition}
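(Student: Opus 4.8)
The plan is to prove the contrapositive using the lifting of compatible families flagged just above the statement, together with the characterisation of the cohomological obstruction from \cite{abramsky:11a}. Recall that one works with the presheaf of $\mathbb{Z}$-linear combinations of assignments, and that the obstruction attached to a supported section $s_0$ at a maximal context $C_0$ vanishes precisely when $s_0$ extends to a compatible family $\{r_C\}_{C\in\M}$ of such $\mathbb{Z}$-linear combinations --- one per maximal context, agreeing on all pairwise overlaps, with $r_{C_0}$ restricting to $s_0$. Exhibiting a non-vanishing obstruction is therefore the same as exhibiting a supported section that admits no such extension.

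First I would record the lifting at the level of these families. Given a compatible family $\{r_D\}_D$ for $f$, set $r^\Bell_C := \iota_C(r_{\underline{C}})$ for each maximal context $C$ of the Bell scenario, where $\iota_C$ is the $\mathbb{Z}$-linear extension of the injection $\E(\underline{C})\cong\Ecd(C)\subseteq\E(C)$. Compatibility is inherited: for tuples $C,C'$ one has $\underline{C\cap C'}\subseteq\underline{C}\cap\underline{C'}$, so agreement of $r_{\underline{C}}$ and $r_{\underline{C'}}$ on $\underline{C}\cap\underline{C'}$ (which holds by compatibility of $\{r_D\}$) restricts, via the bijections commuting with restriction, to agreement of $r^\Bell_C$ and $r^\Bell_{C'}$ on $C\cap C'$. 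This is exactly the lifting asserted before the proposition.

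The core implication is then immediate by contraposition. Suppose $f^\Bell$ carries a non-vanishing obstruction, witnessed by a supported section $s_0$ at a maximal context $C_0$ of the Bell scenario. Since the support of $f^\Bell$ is concentrated on codiagonal assignments, $s_0$ corresponds under $\Ecd(C_0)\cong\E(\underline{C_0})$ to a unique $t_0\in S_f(\underline{C_0})$. If $t_0$ extended to a compatible family in $f$, the lifting above would produce a compatible family for $f^\Bell$ restricting to $s_0$, contradicting the non-vanishing of its obstruction; hence $t_0$ admits no such extension, which already yields the desired non-vanishing obstruction in $f$.

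The one point requiring care --- and the main obstacle --- is that $\underline{C_0}$ need not be a maximal context of $f$: tuples $C_0$ with repeated measurements have $|\underline{C_0}|<n$, whereas the obstruction of \cite{abramsky:11a} is attached to sections over the maximal contexts $\mathcal{P}_n X$. I would resolve this by promoting $t_0$ to a maximal context: choose any $D\in\mathcal{P}_n X$ with $\underline{C_0}\subseteq D$ and any supported $u_0\in S_f(D)$ with $u_0|_{\underline{C_0}}=t_0$ (such $u_0$ exists because $t_0$ lies in the support of the marginal $f_{\underline{C_0}}=f_D|_{\underline{C_0}}$). Were $u_0$ to extend to a compatible family, so would $t_0$; hence $u_0$ carries a non-vanishing obstruction over a genuine maximal context. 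Beyond this bookkeeping the argument is formal, which is also why the converse is unclear: the lift $\iota$ is far from surjective on compatible families --- those of $f^\Bell$ may involve non-codiagonal assignments with no counterpart in $f$ --- so there is no evident way to transport a compatible family from $f^\Bell$ back down to $f$.
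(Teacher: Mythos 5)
Your proof is correct and takes exactly the route the paper intends: the proposition is stated there as a direct consequence of the remark that compatible families of $\mathbb{Z}$-linear combinations lift along the injections $\E(\underline{C})\cong\Ecd(C)\subseteq\E(C)$, which is precisely your contrapositive argument. Your extra care with tuples containing repeated measurements (where $\underline{C_0}$ is not a maximal context of $f$, so the witnessing section must be promoted to some $D\in\mathcal{P}_nX$) fills in a bookkeeping detail that the paper leaves implicit.
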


In good cases, we can use the canonical extension of the previous section to extend a model $e$ on $(X,O,\M)$ to a model $e'$ on $(X,O,\mathcal{P}_nX)$ and then apply the present construction to obtain a Bell model which, in general, is equivalent to the original in terms of strong contextuality.

\begin{corollary}\label{cor:sccor}
  Let $e$ be an empirical model on $(X,O,\M)$ and suppose that $e$ is canonically extendable to $(X,O,\mathcal{P}_nX)$.
  Then $e'^\Bell$ is strongly non-local if and only if $e$ is  strongly contextual.
\end{corollary}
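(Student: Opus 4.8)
The plan is to chain together the two main results that the excerpt has already established, so that the corollary becomes essentially a one-line consequence. Recall the statement: $e$ is canonically extendable to $(X,O,\mathcal{P}_nX)$, and we must show that $e'^\Bell$ is strongly non-local if and only if $e$ is strongly contextual. The key observation is that there are two separate constructions in play — the canonical extension $e \mapsto e'$ (from $\M$ to $\mathcal{P}_nX$) and the Bell construction $e' \mapsto e'^\Bell$ — and each has its own preservation-of-strong-contextuality result already in hand.

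First I would invoke the last bullet of the preceding proposition, which states that for any model $f$ on a scenario of the form $(X,O,\mathcal{P}_nX)$, the Bell model $f^\Bell$ is strongly non-local if and only if $f$ is strongly contextual. Applying this with $f = e'$ immediately gives that $e'^\Bell$ is strongly non-local if and only if $e'$ is strongly contextual. This reduces the problem to showing that $e'$ is strongly contextual if and only if $e$ is strongly contextual.

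Second, this remaining equivalence is exactly the content of Proposition \ref{prop:strctx}, applied with $\M' = \mathcal{P}_nX$: provided $e$ is extendable to $\M'$, the canonical extension $e'$ is strongly contextual if and only if $e$ is. The hypothesis that $e$ is \emph{canonically} extendable to $(X,O,\mathcal{P}_nX)$ is precisely what guarantees that $e'$ is a well-defined (compatible) model and that $e$ is extendable to $\mathcal{P}_nX$, so the hypotheses of Proposition \ref{prop:strctx} are met. Composing the two equivalences then yields the corollary directly.

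The only point requiring care — and what I expect to be the main (minor) obstacle — is checking that the canonical extendability hypothesis is genuinely enough to license the use of Proposition \ref{prop:strctx}, whose statement assumes $e$ is extendable to $\M'$ rather than canonically extendable. This is immediate once one notes that canonical extendability means $e'$ itself is a well-defined model extending $e$, so $e$ is in particular extendable to $\M'$. A second small check is that the Bell-construction proposition indeed applies to $e'$, which it does since $e'$ lives on the scenario $(X,O,\mathcal{P}_nX)$ of exactly the form that proposition requires. Beyond these bookkeeping remarks the proof is a formal composition of the two prior results, so I would keep it to a couple of lines citing the last proposition and Proposition \ref{prop:strctx}.
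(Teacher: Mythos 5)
Your proof is correct and follows exactly the intended argument: the paper states this as an immediate corollary of composing the bijection-of-global-sections proposition (applied to $f = e'$, giving $e'^\Bell$ strongly non-local iff $e'$ strongly contextual) with Proposition \ref{prop:strctx} (giving $e'$ strongly contextual iff $e$ is, since canonical extendability in particular supplies an extension of $e$ to $\mathcal{P}_nX$). Your bookkeeping remarks, including that canonical extendability implies the plain extendability hypothesis of Proposition \ref{prop:strctx}, match what the paper leaves implicit.
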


\section{Bell Models from Kochen-Specker Models}\label{ssec:ksconstruction}

The construction of an equivalent Bell model can be carried out for all Kochen-Specker models in which maximal contexts are all of the same size.

\begin{proposition}\label{prop:ksce}
The Kochen-Specker model for any scenario\footnote{Recall that we must have $|C| = n$ for all $C \in \M$.} $(X,O,\M)$ is canonically extendable to $(X,O,\mathcal{P}_nX)$.
\end{proposition}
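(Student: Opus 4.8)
The plan is to unwind the claim that $e$ is canonically extendable to $(X,O,\mathcal{P}_nX)$ into its two genuine requirements: that each $e'_C$ for $C \in \mathcal{P}_nX$ is a nonempty Boolean distribution, and --- the real content --- that the family $\{e'_C\}_{C\in\mathcal{P}_nX}$ is compatible (no-signalling). Since the paper already records that $\supp(e'_C) = S_e(C)$, everything reduces to understanding the sub-presheaf $S_e$ on subsets of $X$ for a Kochen--Specker model.

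First I would compute $S_e$ on contexts. For a proper sub-context $W \subsetneq D$ with $D \in \M$, marginalising $S_{\KS}(D) = \setdef{s}{o(s)=1}$ shows $S_e(W) = \setdef{t \in \E(W)}{o(t) \leq 1}$, since one can always pad a $W$-assignment carrying at most one $1$ to a $D$-assignment with exactly one $1$ (using $W \neq D$); whereas $S_e(D) = S_{\KS}(D)$ at a maximal context. More generally, for arbitrary $U \subseteq X$, $t \in S_e(U)$ holds iff $o(t|_{D\cap U}) \le 1$ for every $D \in \M$ with $D \not\subseteq U$ and $o(t|_D) = 1$ for every $D \in \M$ with $D \subseteq U$. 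Using $|C| = n$ for all $C \in \M$, a set $C \in \mathcal{P}_nX$ lies in $\M$ exactly when some $D \in \M$ satisfies $D \subseteq C$; so $S_e(C) = S_{\KS}(C)$ if $C \in \M$, and otherwise $S_e(C) = \setdef{s \in \E(C)}{o(s|_{D \cap C}) \le 1 \;\; \forall D \in \M}$. Nonemptiness is then immediate: the all-$0$ assignment lies in $S_e(C)$ when $C \notin \M$, and any single-$1$ assignment lies in $S_{\KS}(C)$ when $C \in \M$.

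The core step, from which compatibility follows, is a gluing lemma: for every $C \in \mathcal{P}_nX$ and every $U \subseteq C$, the restriction $S_e(C) \to \E(U)$ maps onto $S_e(U)$. The inclusion of $S_e(C)|_U$ into $S_e(U)$ is just functoriality of the sub-presheaf $S_e$; for surjectivity I would exhibit an explicit extension of a given $t \in S_e(U)$. When $C \notin \M$, padding $t$ with $0$ on $C \setminus U$ works: since $C \notin \M$ forces $D \cap U \subsetneq D$ for every $D \in \M$, the hypothesis $t \in S_e(U)$ gives $o(t|_{D \cap U}) \le 1$, and the padded assignment has its $1$s confined to $U$, so $o(s|_{D\cap C}) \le 1$ for all $D$. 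When $C \in \M$, I would instead pad to an assignment with exactly one $1$ (placing the single $1$ in $C \setminus U$ when $t$ is all-$0$, which is possible as then $U \subsetneq C$), landing in $S_{\KS}(C)$. Granting this lemma, compatibility is immediate: for $C_1, C_2 \in \mathcal{P}_nX$ with $U := C_1 \cap C_2$, both $\supp(e'_{C_1})|_U$ and $\supp(e'_{C_2})|_U$ equal $S_e(U)$, so the Boolean marginals agree. Together with nonemptiness and the direct check that $e'_D = e_D$ on each $D \in \M$ (the $W = D$ term of the meet already yields $e_D(s)$, and the remaining terms are $1$ whenever $e_D(s)=1$), this shows $\{e'_C\}_{C \in \mathcal{P}_nX}$ is a well-defined model extending $e$.

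The main obstacle is precisely the compatibility, which the paper has already warned (Example \ref{ex:sig}) can fail for canonical extensions in general. The point is that the very permissive supports of a Kochen--Specker model at proper sub-contexts --- ``at most one $1$'' rather than a rigid parity-type constraint --- are exactly what make the zero-padding gluing available and hence force no-signalling. I would therefore concentrate the care on the case analysis in the gluing lemma (maximal versus non-maximal $C$) and on justifying the support computation $S_e(W) = \setdef{t}{o(t)\le 1}$ for proper $W$, since these carry the whole argument.
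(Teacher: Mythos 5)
Your proof is correct and takes essentially the same route as the paper's: the same case split between $C \in \M$ and $C \notin \M$ for $C \in \mathcal{P}_nX$ (using $|C|=n$ to rule out proper maximal sub-contexts), the same zero-padding extension for non-maximal $C$ and single-$1$ padding for maximal $C$, and compatibility deduced from the fact that the marginal support over any $U \subseteq C$ is a set ($S_e(U)$, which coincides with the paper's formula $\Forall{V \subseteq U, V \in \down\M} o(t|_V) \leq 1$) that does not depend on the ambient context $C$. The only cosmetic differences are your ``gluing lemma'' phrasing via surjectivity of the restriction maps of the sub-presheaf $S_e$, and your explicit verification that $e'_D = e_D$ for $D \in \M$, which the paper leaves implicit in its case analysis.
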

\begin{proof}
  Let $e$ be the Kochen-Specker model for the scenario $(X,O,\M)$.
  For a context $C \in \mathcal{P}_nX$ and an assignment $s \in \E(C)$, we have
  \[ e'_C(s) = 1 \;\;\leftrightarrow\;\; s \in S_{e}(C) \;\;\leftrightarrow\;\;
     \Forall{W \subseteq C, W \in \down\M} \, o(s|_W) \leq 1 \;\wedge\;
     \Forall{D \subseteq C, D \in \M}  \,    o(s|_D) = 1 
  \]
  Since $C \in \mathcal{P}_nX$, it cannot have any proper sub-context that is a maximal context in $\M$.
  Thus, we can write:
  \[ e'_C(s) = \begin{cases}
                 1  & \text{if $C \in \M \wedge o(s) = 1$}
		 \\
		 1  & \text{if $C \notin \M \wedge \Forall{W \subseteq C, W \in \down\M} \, o(s|_W) \leq 1$}
		 \\
		 0  & \text{otherwise}
	       \end{cases}.
  \]
  This is clearly a distribution for every $C \in \mathcal{P}_nX$, as there is always at least one possible assignment.

  Now, let $U \subsetneq C$ and consider the marginalisation $\left. e'_C\right|_U$.
 First, we look at the case that $C \notin \M$. Then it is easy to see that, for any $t \in \E(U)$, $\left. e'_C\right|_U(t) = 1$ implies $\Forall{V \subseteq U, V \in \down\M} o(t|_V) \leq 1$.
  Conversely, if the latter holds, one can extend $t$ to $C$
  by assigning $0$ to all new measurements, obtaining an assignment $s \in \E(C)$ that satisfies
  $\Forall{W \subseteq C, W \in \down\M} o(s|_W) \leq 1$, since $o(s|_W) = o(s|_{W \cap U}) = o(t|_{W\cap U}) \leq 1$ for all such $W$. We then have $e'_C(s)=1$, and hence $\left. e'_C\right|_U(t) = \left.e'_C\right|_U(s|_U) =1$. So,
  \begin{equation}\label{eq:aaa}
    \left.e'_C\right|_U(t)=1 \quad \leftrightarrow\quad \Forall{V \subseteq U, V \in \down\M} \, o(t|_V)\leq 1.
  \end{equation}
  For the case that $C \in \M$, a section $t \in \E(U)$ satisfies $\left. e'_C\right|_U(t) = 1$ if and only if $o(t) \leq 1$.
  So (\ref{eq:aaa}) holds in this case, too, since $U$ itself is one of the $V$'s in the formula.

  This shows that the marginalisation to any $U$ is independent of the maximal context from which one starts, proving compatibility as required.
\end{proof}

Combining this with corollary \ref{cor:sccor} gives the following.

\begin{corollary}\label{cor:kscor2}\index{Kochen-Specker model!equivalent Bell model}
For any Kochen-Specker model $e$, the Bell model $e'^\Bell$ is well-defined, and is strongly non-local if and only if $e$ is strongly contextual.
\end{corollary}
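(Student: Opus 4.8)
The plan is to observe that this corollary is an immediate consequence of the two results that precede it, namely Proposition \ref{prop:ksce} and Corollary \ref{cor:sccor}, so that essentially no new work is required beyond checking that their hypotheses line up.

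First I would recall that, by Definition \ref{def:ksfam}, a Kochen-Specker model $e$ lives on a scenario $(X,O,\M)$ in which every maximal context has the same size $n$, so that $n(\M) = n$ and the relevant enlarged cover is exactly $\mathcal{P}_nX$. Proposition \ref{prop:ksce} then guarantees that $e$ is canonically extendable to $(X,O,\mathcal{P}_nX)$; in particular the canonical extension $e'$ is a well-defined (no-signalling) model on this cover. Since the Bell construction of section \ref{ssec:construction} sends any empirical model on a scenario of the form $(X,O,\mathcal{P}_nX)$ to a well-defined Bell model, applying it to $e'$ yields a well-defined $e'^\Bell$. This establishes the first assertion.

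For the equivalence I would simply feed the same $e$ into Corollary \ref{cor:sccor}. Its hypothesis --- that $e$ be canonically extendable to $(X,O,\mathcal{P}_nX)$ --- is precisely what Proposition \ref{prop:ksce} supplies, so its conclusion applies verbatim: $e'^\Bell$ is strongly non-local if and only if $e$ is strongly contextual. This completes the argument.

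Since the statement is a straightforward specialisation, there is no real obstacle; all the substantive content has already been discharged upstream. The genuine work lies in Proposition \ref{prop:ksce}, where one verifies that the canonical extension of a Kochen-Specker model is compatible (i.e.~that marginalisation to any $U \subsetneq C$ is independent of the maximal context chosen), and in the chain through Proposition \ref{prop:strctx} into Corollary \ref{cor:sccor}, which records that the canonical extension and the Bell construction each preserve strong contextuality. The only point deserving a moment's care is confirming that the $n$ in the Kochen-Specker definition coincides with $n(\M)$, which holds exactly because all maximal contexts are assumed to have equal size.
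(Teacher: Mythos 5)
Your proposal is correct and matches the paper's own argument exactly: the paper derives this corollary by combining Proposition \ref{prop:ksce} (canonical extendability of Kochen-Specker models to $\mathcal{P}_nX$) with Corollary \ref{cor:sccor}, which is precisely your chain of reasoning, including the observation that the constant context size $n$ is what makes the hypotheses line up.
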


It was shown in \cite{mansfield:13t} that symmetric Kochen-Specker models are contextual if and only if they are strongly contextual\footnote{A Kochen-Specker model is symmetric if for every pair of measurements $m,m' \in X$ there exists a hypergraph automorphism $\alpha:X \rightarrow X$ of the hypergraph $(X,\M)$ such that $\alpha(m) = m'$.}. This means that for the whole class of symmetric Kochen-Specker models (including the contextual triangle and the $18$-vector model), we can construct Bell models that are equivalent in terms of contextuality.

\begin{corollary}
For any symmetric Kochen-Specker model $e$, the Bell model $e'^\Bell$ is well-defined, and is non-local if and only if $e$ is contextual.
\end{corollary}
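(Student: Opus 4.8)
The plan is to combine two results already established in the excerpt: the equivalence from the previous footnoted statement that symmetric Kochen-Specker models are contextual if and only if they are strongly contextual, and Corollary \ref{cor:kscor2}, which handles the strong-contextuality level of the correspondence. First I would observe that $e$ is a symmetric Kochen-Specker model, so in particular it is a Kochen-Specker model, and hence by Proposition \ref{prop:ksce} it is canonically extendable to $(X,O,\mathcal{P}_nX)$. This guarantees that $e'$ is a well-defined model and therefore that the Bell construction $e'^\Bell$ is well-defined, disposing of the well-definedness claim immediately.

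For the equivalence of non-locality and contextuality, I would chain four biconditionals. By Corollary \ref{cor:kscor2}, $e'^\Bell$ is strongly non-local if and only if $e$ is strongly contextual. By the symmetry hypothesis (the footnoted result from \cite{mansfield:13t}), $e$ is strongly contextual if and only if $e$ is contextual. It therefore remains to link strong non-locality of $e'^\Bell$ back to mere non-locality of $e'^\Bell$. Here I would invoke the same symmetry-collapse phenomenon: since $e$ is symmetric and Kochen-Specker, and $e'^\Bell$ inherits a corresponding symmetric structure through the codiagonal Bell construction, the hierarchy collapses for $e'^\Bell$ as well, so $e'^\Bell$ is non-local if and only if it is strongly non-local. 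Concatenating these three equivalences yields $e'^\Bell$ non-local $\Leftrightarrow$ $e$ contextual, as required.

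The main obstacle I expect is justifying that the collapse of the contextuality hierarchy transfers from $e$ to $e'^\Bell$, i.e.~that $e'^\Bell$ being merely non-local already forces strong non-locality. One clean way to avoid re-proving this from scratch is to route everything through $e$ rather than through $e'^\Bell$: apply the symmetry result to conclude $e$ is contextual $\Leftrightarrow$ $e$ is strongly contextual, then use Corollary \ref{cor:kscor2} to pass $e$ strongly contextual $\Leftrightarrow$ $e'^\Bell$ strongly non-local, and finally use the general Proposition relating global sections of $f$ and $f^\Bell$ (the bulleted correspondence, applied with $f = e'$) to translate $e'^\Bell$ non-local $\Leftrightarrow$ $e'$ contextual. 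This reduces the problem to showing $e'$ contextual $\Leftrightarrow$ $e$ contextual, which should follow because the canonical extension is the most conservative no-signalling extension and does not alter the probabilistic obstruction to a global section. Care is needed because the correspondence proposition is stated for the plain (non-strong) levels only in terms of global sections, so I would need to check that the ordinary probabilistic notion of contextuality is genuinely preserved under the canonical extension, not merely the possibilistic notions covered explicitly earlier.

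Concretely, the proof would read: since $e$ is a Kochen-Specker model, Proposition \ref{prop:ksce} ensures $e'^\Bell$ is well-defined. For the equivalence, we argue
\[
e'^\Bell \text{ non-local}
\;\Leftrightarrow\;
e'^\Bell \text{ strongly non-local}
\;\Leftrightarrow\;
e \text{ strongly contextual}
\;\Leftrightarrow\;
e \text{ contextual},
\]
where the first step is the collapse of the hierarchy for the symmetric model $e'^\Bell$, the second is Corollary \ref{cor:kscor2}, and the third is the symmetry result of \cite{mansfield:13t}. The only nontrivial verification is the first biconditional, which I would reduce, via the global-section correspondence between $e'$ and $e'^\Bell$, to the statement that $e'$ itself inherits the contextuality-collapse property from $e$.
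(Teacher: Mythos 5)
Your skeleton matches what the paper intends (the paper gives no explicit proof: it presents this corollary as the immediate combination of Corollary \ref{cor:kscor2} with the collapse result of \cite{mansfield:13t}), and your well-definedness argument via Proposition \ref{prop:ksce} is correct. However, the link that you yourself flag as the main obstacle --- that non-locality of $e'^\Bell$ already forces strong non-locality, equivalently (via the global-sections bijection applied to $f = e'$) that $e'$ contextual implies $e$ contextual --- is a genuine gap, and neither of your proposed justifications can close it. First, the collapse theorem of \cite{mansfield:13t} is a statement about Kochen-Specker models, and $e'^\Bell$ is not one: in the contexts $\{\tuple{x,1},\tuple{x,2}\}$ pairing the two copies of a single measurement $x$, the possible assignments are $00$ and $11$, i.e.\ assignments with $o(s)=0$ or $o(s)=2$, so the KS support condition fails and ``inherited symmetry'' gives nothing. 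Second, the fallback claim that the canonical extension ``does not alter the probabilistic obstruction to a global section'' conflates two different things: Proposition \ref{prop:strctx} says only that $e$ and $e'$ have the same set of \emph{consistent global assignments}, $S_{e'}(X)=S_e(X)$, which is why the strong notion transfers; but a global section of $e'$ must in addition reproduce the canonical distributions $e'_C$ on the new contexts $C \in \mathcal{P}_nX \setminus \M$, and this is a strictly stronger requirement than being a global section of $e$.

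In fact the step you need is not merely unproven but false. Take the Kochen-Specker model $e$ on the $4$-cycle cover $\M = \{AB,BC,CD,DA\}$ (the cover of Example \ref{ex:sig}), so each edge has support $\{01,10\}$. This model is symmetric, and $S_e(X) = \{1010,0101\}$ (outcomes listed in the order $A,B,C,D$), so it is not strongly contextual; hence by the cited result it is non-contextual --- indeed the Boolean distribution with support $S_e(X)$ is a global section, and the uniform mixture of these two assignments realises $e$ locally. But the canonical extension $e'$ to $\mathcal{P}_2X$ has \emph{full} support on the diagonal contexts $AC$ and $BD$ (the only constraints there come from singletons), while every global section $d$ of $e'$ satisfies $\supp(d) \subseteq S_{e'}(X) = S_e(X)$ and therefore marginalises on $AC$ to a distribution supported in $\{00,11\}$. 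So $e'$ has no global section, and by the bijection proposition $e'^\Bell$ is (logically) non-local, although $e$ is non-contextual. Consequently the direction ``$e'^\Bell$ non-local $\Rightarrow$ $e$ contextual'' cannot be established by any argument; what the ingredients you invoke actually yield --- and all that the paper's own implicit derivation yields --- is that $e'^\Bell$ is well-defined, that $e'^\Bell$ is \emph{strongly} non-local if and only if $e$ is contextual, and hence the single implication that $e$ contextual implies $e'^\Bell$ non-local.
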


\section{Examples}\label{examples}

\subsection*{The Contextual Triangle}\index{contextual triangle}
Carrying out the construction on the triangle yields the following $(2,3,2)$ model.
\begin{center}
\begin{tabular}{p{3pt}l|cccc}
&~& $00$ & $01$ & $10$ & $11$   \\ \hline
$A_{1}$&$A_{2}$ & 1 & 0 & 0 & 1 \\
$A_{1}$&$B_{2}$ & 0 & 1 & 1 & 0 \\
$A_{1}$&$C_{2}$ & 0 & 1 & 1 & 0 \\
$B_{1}$&$A_{2}$ & 0 & 1 & 1 & 0 \\
$B_{1}$&$B_{2}$ & 1 & 0 & 0 & 1 \\
$B_{1}$&$C_{2}$ & 0 & 1 & 1 & 0 \\
$C_{1}$&$A_{2}$ & 0 & 1 & 1 & 0 \\
$C_{1}$&$B_{2}$ & 0 & 1 & 1 & 0 \\
$C_{1}$&$C_{2}$ & 1 & 0 & 0 & 1
\end{tabular}
\end{center}
We include site indices to make it clear that different measurements in the same context are now considered to belong to different sites.
It is especially interesting that the model can be seen to contain many different PR boxes \cite{popescu:94} as sub-models. These are:
\begin{center}
\begin{tabular}{ccc}
\begin{tabular}{p{3pt}l|cccc}
&~& $00$ & $01$ & $10$ & $11$   \\ \hline
$A_1$&$A_2$ & $1$ & $0$ & $0$ & $1$ \\
$A_1$&$C_2$ & $0$ & $1$ & $1$ & $0$ \\
$B_1$&$A_2$ & $0$ & $1$ & $1$ & $0$ \\
$B_1$&$C_2$ & $0$ & $1$ & $1$ & $0$
\end{tabular}
&
\begin{tabular}{p{3pt}l|cccc}
&~& $00$ & $01$ & $10$ & $11$  \\ \hline
$B_1$&$A_2$ & $0$ & $1$ & $1$ & $0$ \\
$B_1$&$B_2$ & $1$ & $0$ & $0$ & $1$ \\
$C_1$&$A_2$ & $0$ & $1$ & $1$ & $0$ \\
$C_1$&$B_2$ & $0$ & $1$ & $1$ & $0$ 
\end{tabular}
&
\begin{tabular}{p{3pt}l|cccc}
&~& $00$ & $01$ & $10$ & $11$   \\ \hline
$B_1$&$A_2$ & $0$ & $1$ & $1$ & $0$ \\
$B_1$&$C_2$ & $0$ & $1$ & $1$ & $0$ \\
$C_1$&$A_2$ & $0$ & $1$ & $1$ & $0$ \\
$C_1$&$C_2$ & $1$ & $0$ & $0$ & $1$
\end{tabular}
\end{tabular}
\end{center}
and those obtained by reversing the order of the measurements.

We note that neither the triangle nor the PR box is realisable in quantum mechanics. The triangle provides the simplest possible example of a contextual model, and the PR box is the only strongly contextual $(2,2,2)$ model \cite{lal:13}.

\subsection*{The Peres-Mermin Square}\index{Peres-Mermin model}

The Peres-Mermin square \cite{mermin:93} is another example of a strongly contextual model, though it does not fall into the class of Kochen-Specker models. It, too, has the desirable property that it can be extended to $\mathcal{P}_nX$. We can therefore construct a tripartite Bell model from it which is equivalent in terms of strong contextuality. The constructed model contains $36$ different $(3,5,2)$ non-local sub-models, which are essentially `padded-out' versions of the square. The following table represents the non-local sub-model on the measurement cover \[\M = \{A,B,C,D,G\} \times \{B,D,E,F,H\} \times \{C,F,G,H,I\},\] though we only explicitly write those rows that do not have full support.
\begin{center}
\begin{tabular}{p{3pt}p{3pt}l|cccccccc}
&~&~      & $000$ & $001$ & $010$ & $011$ & $100$ & $101$ & $110$ & $111$   \\ \hline
$A_1$&$B_2$&$C_3$ &  $0$  &  $1$  &  $1$  &  $0$  &  $1$  &  $0$  &  $0$  &  $1$  \\
$D_1$&$E_2$&$F_3$ &  $0$  &  $1$  &  $1$  &  $0$  &  $1$  &  $0$  &  $0$  &  $1$  \\
$G_1$&$H_2$&$I_3$ &  $0$  &  $1$  &  $1$  &  $0$  &  $1$  &  $0$  &  $0$  &  $1$  \\
$A_1$&$D_2$&$G_3$ &  $1$  &  $0$  &  $0$  &  $1$  &  $0$  &  $1$  &  $1$  &  $0$  \\
$B_1$&$E_2$&$H_3$ &  $1$  &  $0$  &  $0$  &  $1$  &  $0$  &  $1$  &  $1$  &  $0$  \\
$C_1$&$F_2$&$I_3$ &  $1$  &  $0$  &  $0$  &  $1$  &  $0$  &  $1$  &  $1$  &  $0$ 
\end{tabular}
\end{center}
Ignoring the additional rows, and the indices, which are there as a reminder that the measurements belong to different sites, this looks just like the table for the Peres-Mermin model itself; though it is a genuinely new strongly non-local Bell model.

An interesting point is that since the Peres-Mermin contextuality proof is based on a parity argument, just like the GHZ proof, one might expect that the Bell model we have constructed should contain a GHZ sub-model. However, it can easily be shown by comparison with the table for the GHZ-Mermin model that this is not the case. Similarly, it can be shown (for example by comparing models in a three-dimensional tabular representation \cite{mansfield:13t}) that it contains no tripartite Hardy paradox \cite{ghosh:98,wang:12}.

\section{Conclusion}\label{conclusions}

We have dealt with two related ideas. The refinement of extendability introduced here is a development of the sheaf-theoretic framework, which captures partial approximations to locality and non-contextuality. This allows us to characterise contextuality and strong contextuality in sub-models of an empirical model, as we have seen in section \ref{ssec:extension}.

The second idea is to introduce a method of constructing Bell models from models of a more general kind in such a way that these are equivalent in terms of non-locality/contextuality. This can even work at the level of probabilities. Equivalent Bell forms of models are desirable since, both practically and theoretically, it is easier to motivate a notion of locality in such scenarios than the equivalent notion of non-contextuality in a more general scenario, as one can always appeal to relativity as a justification for certain assumptions. We have also mentioned that non-locality is better understood as an information theoretic resource.

These two ideas are related by the fact that, for any model, the existence of the canonical extension to $\mathcal{P}_nX$ will guarantee the ability to construct a Bell model that is equivalent in terms of strong contextuality. We have proved that for the entire class of Kochen-Specker models with maximal contexts of constant size we can carry out this construction, and that for the symmetric models the equivalence even holds for contextuality. Even in the more general form, which applies to strong contextuality only, this is a very useful result since so many of the familiar examples of non-local/contextual models are strongly contextual: we have mentioned the GHZ-Mermin model, the 18-vector Kochen-Specker model, the Peres-Mermin model, and the Popescu-Rohrlich correlations.

There are several open questions arising from this work. We would like to know whether there is an analogue of \Vorobev's theorem \cite{vorobev:62}\index{Vorob'ev's theorem} for this partial notion of extendibility; that is, given any measurement cover $\M'$, can there be a complete characterisation of the measurement covers $\M \preceq \M'$ such that any empirical model defined on $\M$ is extendable to $\M'$. This could potentially lead to applications to macroscopic realism\index{realism!macroscopic realism} similar to \cite{barbosa:13}. We would also like to know whether there are other general classes of `good' models for which we can guarantee the ability to extend to $\mathcal{P}_nX$ and thereby construct Bell models that are equivalent either in terms of contextuality or strong contextuality: a class of Peres-Mermin-like models for example.

It is especially interesting that when we constructed the equivalent Bell model for the contextual triangle, we ended up with what is essentially a folding of PR boxes. This appears to point to a deeper relationship between the models, which merits further investigation. The PR box has been much studied and has been considered, for example, as a possible unit of non-locality \cite{barrett:05}. Since the triangle is the simplest possible example of a contextual model, it is conceivable that it could be a unit of contextuality. One might hope for some sort of analogous result to Kuratowski's theorem\index{Kuratowski's theorem} for graphs, which states that a graph is planar if and only if it does not contain $K_5$ or $K_{3,3}$ as subgraphs. For example, it could be that, for some notion of reducibility, contextual models must reduce to the triangle or to elements of some set of irreducible models containing the triangle.

Another important issue that has not been dealt with so far is that of quantum realisability: given that a model is quantum realisable, we would like to understand when its extensions are and vice versa. Here, it is especially relevant to the example of the Peres-Mermin model. An aim of this work is to propose Bell tests that correspond to contextuality proofs such as that of Peres \& Mermin by giving a means of quantum mechanically reproducing its equivalent Bell model. First, however, it will be necessary to understand how quantum realisability relates to our constructions.

\section*{Acknowledgements}
The authors thank Samson Abramsky for valuable discussions and comments.
RSB gratefully acknowledges support from the Marie Curie Initial Training Network - MALOA - From MAthematical LOgic to Applications, PITN-GA-2009-238381.

\bibliographystyle{eptcs}
\bibliography{refs}

\end{document}